\DeclareMathOperator{\GRS}{GRS}
\DeclareMathOperator{\TGRS}{TGRS}
\newtheorem{Theorem}{Theorem}[section]
\newtheorem{Remark}{Remark}[section]
\newtheorem{Definition}{Definition}[section]
\newtheorem{Proposition}{Proposition}[section]
\newtheorem{Example}{Example}[section]
\makeatletter \@addtoreset{equation}{section} \makeatother
\begin{document}
	
	\title{Improved Decoding Algorithms for MDS and Almost-MDS Codes from Twisted GRS Codes \let\thefootnote\relax\footnote{E-mail addresses: wanggdmath@163.com (G. Wang), hwliu@ccnu.edu.cn (H. Liu), luojinquan@ccnu.edu.cn (J. Luo)
	}}
	\author{Guodong Wang, Hongwei Liu, Jinquan Luo}
	\date{\small
		School of Mathematics and Statistics,
		Central China Normal University,
		Wuhan,  430079, China\\
	}
	\maketitle
	{\noindent\small{\bf Abstract:}
In this paper, firstly, we study decoding of a general class of twisted generalized Reed-Solomon (TGRS) codes and provide a precise characterization of the key equation for TGRS codes and propose a decoding algorithm.
Secondly, we further study decoding of almost-MDS TGRS codes and provide a decoding algorithm.
These two decoding algorithms are more efficient in terms of performance compared with the decoding algorithms presented in [Sun et al., IEEE-TIT, 2024] and [Sui et al., IEEE-TIT, 2023] respectively.
Moreover, these two optimized decoding algorithms can be applied to the decoding of a more general class of twisted Goppa codes.
}
	\vspace{1ex}

	{\noindent\small{\bf Keywords:}
        Twisted generalized Reed-Solomon code, twisted Goppa code, MDS code, almost-MDS code, decoding algorithm
    }
	
	\noindent 2020 \emph{Mathematics Subject Classification.}  94B05, 94B35

	\section[Introduction]{Introduction}
    A linear code with parameters $[n, k, d]$ is referred to as a maximum distance separable (MDS) code if it meets the Singleton bound,
i.e., $d = n - k + 1$. MDS codes, due to their excellent properties,
have garnered extensive attention. When $d = n - k$,
the linear code is called almost-MDS. Various types of MDS codes exist,
and numerous methods for constructing these codes have been proposed \cite{Ezerman, Fang, Niu, Sok, Wu2}.

The generalized Reed-Solomon (GRS) codes stand out as a crucial class of MDS codes,
distinguished by its remarkable error correction capability, streamlined algebraic structure,
and efficient decoding algorithms.
Goppa codes, which are subfield subcodes of GRS codes introduced by Goppa in \cite{Goppa1, Goppa2},
have garnered significant attention from scholars due to their application in the McEliece and Niederreiter cryptosystems \cite{Berger, Li, Sui3}.

Niederreiter was the first researcher to suggest a public-key system using GRS codes \cite{Niederreiter},
but this system later turned out to be susceptible to the Sidelnikov-Shestakov attack \cite{Sidelnikov}.
Subsequently, Beelen et al. introduced twisted Reed-Solomon (TRS) codes in \cite{Beelen1},
presenting novel general constructions of MDS codes that are not equivalent to GRS codes.
In \cite{Beelen2}, Beelen et al. further investigated the structure of TRS codes and proposed using TRS codes as a substitute for Goppa codes in McEliece cryptosystems.
Thereafter, Lavauzelle et al. developed an efficient key recovery algorithm specifically for cryptosystems based on TRS codes \cite{Lavauzelle}.

Following this line, the research has provided insights into their dual, self-dual, linear complementary dual (LCD),
and their hulls (the intersections of these codes and their duals), as detailed in \cite{Guo, Huang1, Huang2, Sui1, Sui2, Sui4, Wu3}.
More recently, multiple twists GRS codes have been studied in \cite{Hu, Meena, Zhao}.

On the other hand, effective decoding algorithms are pivotal in the study of error-correcting codes.
Various methods for decoding GRS codes were studied,
including the Peterson-Gorenstein-Zierler Algorithm \cite{Peterson}, the Berlekamp-Massey Algorithm \cite{Berlekamp2}, and the Sugiyama Algorithm \cite{Sugiyama}.
The Sugiyama Algorithm leverages the Euclid's Algorithm for polynomials in a straightforward and potent way.
In \cite{Sui3}, Sui et al. explored generalized Goppa codes, which were applicable to the Niederreiter public key cryptosystem,
and introduced an efficient decoding algorithm for twisted Goppa codes based on the extended Euclid's Algorithm.
However, this algorithm could only correct $\lfloor \frac{t-1}{2} \rfloor$ errors when the minimum distance $d$ of the  Goppa code is at least $t+1$,
where $t$ is the degree of the Goppa polynomial $g(x)$ and $\lfloor a \rfloor$ denotes the greatest integer $\le a$.
Based on the work in \cite{Sui3}, Sun et al. in \cite{Sun1} improved the results.
They provided decoding algorithms which can correct $\lfloor \frac{t}{2} \rfloor$ errors for two classes of MDS TGRS codes and a class of twisted Goppa codes,
where the minimum distance $d$ is at least $t+1$ and $t$ is even.

The key problem of decoding a TGRS code is to solve the following key equation
\begin{align*}
    S(x)\sigma(x) \equiv \tau(x) \pmod{g(x)}
\end{align*}
for given $S(x)$ and $g(x)$, where the degree of $\sigma(x)$ is equal to the number of errors
and $\deg \tau(x) \le \deg \sigma(x)$.
Sun et al. provided the key equation for decoding MDS TGRS codes and presented the corresponding decoding algorithm in \cite{Sun1}.
The decoding processes for two types of MDS TGRS codes are discussed,
with respective parity-check matrices given as follows:
\[
H_1 =\begin{pmatrix}
 v_1(1+\eta\alpha_1^t) & \cdots & v_n(1+\eta\alpha_n^t) \\
v_1\alpha_1 & \cdots & v_n\alpha_n \\
\vdots & & \vdots \\
v_1\alpha_1^{t-2} & \cdots & v_n\alpha_n^{t-2} \\
v_1\alpha_1^{t-1} & \cdots & v_n\alpha_n^{t-1}
\end{pmatrix}
\]
and
\[
H_2 =\begin{pmatrix}
v_1 & \cdots & v_n \\
v_1\alpha_1 & \cdots & v_n\alpha_n \\
\vdots & & \vdots \\
v_1\alpha_1^{t-2} & \cdots & v_n\alpha_n^{t-2} \\
v_1(\alpha_1^{t-1}+\eta\alpha_1^t) & \cdots & v_n(\alpha_n^{t-1}+\eta\alpha_n^t)
\end{pmatrix}.
\]
These two types of MDS TGRS codes have generator matrices which are given by:
\[
\small{
G_1 = \begin{pmatrix}
    w_1 & w_2 & \cdots & w_n \\
    \vdots & \vdots & \ddots & \vdots \\
    w_1 \alpha_1^{n-t-2} & w_2 \alpha_2^{n-t-2} & \cdots & w_n \alpha_n^{n-t-2} \\
    w_1( \alpha_1^{n-t-1} + b_1\alpha_1^{-1}) & w_2( \alpha_2^{n-t-1} + b_1\alpha_2^{-1}) & \cdots & w_n( \alpha_n^{n-t-1} + b_1\alpha_n^{-1})
\end{pmatrix}
}
\]
and
\[
\small{
G_2=\begin{pmatrix}
    w_1 & w_2 & \cdots & w_n \\
    w_1 \alpha_1 & w_2 \alpha_2 & \cdots & w_n \alpha_n \\
    \vdots & \vdots & \ddots & \vdots \\
    w_1 \alpha_1^{n-t-2} & w_2 \alpha_2^{n-t-2} & \cdots & w_n \alpha_n^{n-t-2} \\
    w_1(b_2\alpha_1^{n-t-1} + \alpha_1^{n-t}) & w_2(b_2\alpha_2^{n-t-1} + \alpha_2^{n-t}) & \cdots & w_n(b_2\alpha_n^{n-t-1} + \alpha_n^{n-t})
\end{pmatrix},
}
\]
where $$b_1 = -\frac{\eta\sum\limits_{i=1}^n u_i \alpha_i^{n-1} +  \sum\limits_{i=1}^n u_i \alpha_i^{n-t-1}}{\sum\limits_{i=1}^n u_i \alpha_i^{-1}}(t > 1),
b_2 = -\frac{\sum\limits_{i=1}^n u_i \alpha_i^{n-1} + \eta \sum\limits_{i=1}^n u_i \alpha_i^n}{\eta \sum_{i=1}^n u_i \alpha_i^{n-1}}, w_i=\frac{u_i}{v_i}$$ and $$u_i^{-1}= \prod_{j=1,j\ne i}^n(\alpha_i-\alpha_j), 1\le i \le n.$$
According to Definitions 2.2 and 2.3 (in Section~2),
these two types of TGRS codes are subclasses of the TGRS codes defined in this paper.

In this paper, we study the decoding of a general class of TGRS and provide a more precise characterization of the key equation for TGRS codes.
This characterization aids in optimizing the algorithm presented in \cite{Sun1}, and we have also proposed the optimized decoding algorithm.
We further study the decoding of almost-MDS TGRS codes and provide the optimized decoding algorithm which is more efficient than the decoding algorithm presented in \cite{Sui3} in terms of performance.
Moreover, these two optimized decoding algorithms can be applied to the decoding of a general class of twisted Goppa codes.

This paper is organized as follows.
In Section 2, we introduce some basic notations and definitions of TGRS codes.
In Section 3, we present parity-check matrices of the TGRS codes defined in this paper.
In Section 4, we discuss the decoding of a class of MDS or almost-MDS TGRS codes.
In Section 5, we utilize extended Euclid's Algorithm to provide decoding algorithms for TGRS codes in both MDS and almost-MDS scenarios.
In Section 6, we define a larger class of twisted Goppa codes, and their decoding can reuse the decoding algorithms for the TGRS codes.
Finally, Section 7 concludes this paper. And the performance comparison results between our algorithm and existing algorithms are presented in Table 2.

\section{Preliminaries}
 Let $\mathbb{F}_q$ be the finite field of order $q$, where $q$ is a power of a prime $p$. In this paper, we always assume $\alpha_1, \ldots, \alpha_n$ are distinct elements of  $\mathbb{F}_q$ and $v_1, \ldots, v_n$ are nonzero elements of $\mathbb{F}_q$,
denoted by $\boldsymbol{\alpha}=(\alpha_1, \ldots, \alpha_n)$ and $\boldsymbol{v}=(v_1, \ldots, v_n)$.
In some specific cases, $\alpha_1, \ldots, \alpha_n$ will take distinct nonzero elements of $\mathbb{F}_q$.
For convenience, we denote $\boldsymbol{1}$ as the all-one vector,
$\boldsymbol{0}$ as the all-zero vector. The multiplication of two vectors ${\bf a}=(a_1,...,a_n), {\bf b}=(b_1,...,b_n)$ is defined as ${\bf a} \cdot {\bf b} = (a_1b_1,...,a_nb_n)$,
and their division is defined as ${{{\bf a}} \over {{\bf b}}} = \left(\frac{a_1}{b_1},...,\frac{a_n}{b_n}\right)$.

\begin{Definition}
For $0 \le n-t \le n$, the generalized Reed-Solomon (GRS) code is as follows:
$$
\GRS_{n-t}(\boldsymbol{\alpha}, \boldsymbol{v})=\left\{\left(v_1 f\left(\alpha_1\right), v_2 f\left(\alpha_2\right), \ldots, v_n f\left(\alpha_n\right)\right) \mid f(x) \in \mathbb{F}_q[x]_{n-t}\right\},
$$
where $\mathbb{F}_q[x]_{n-t}$ denotes the set of polynomials in $\mathbb{F}_q[x]$ of degree less than $n-t$,
which is a vector space of dimension $n-t$ over $\mathbb{F}_q$.
\end{Definition}

A GRS code $\GRS_{n-t}(\boldsymbol{\alpha}, \boldsymbol{v})$
is an $[n, n-t, t+1]$ linear code over $\mathbb{F}_q$, which has a generator matrix
$$
G=\small{\begin{pmatrix}
v_1 & \cdots & v_n \\
v_1 \alpha_1 & \cdots & v_n \alpha_n \\
\vdots & & \vdots \\
v_1 \alpha_1^{n-t-2} & \cdots & v_n \alpha_n^{n-t-2} \\
v_1 \alpha_1^{n-t-1} & \cdots & v_n \alpha_n^{n-t-1}
\end{pmatrix}}.
$$

In the references \cite{Beelen1, Guo, Huang2, Lavauzelle, Sui4, Wu3},
various forms of TGRS codes have been discussed. Below, we present the definitions of two types of TGRS codes.
\begin{Definition}\label{deftgrs1}
For $0 \le n-t \le n$, we define the twisted generalized Reed-Solomon (TGRS) code $C_1=\TGRS_{n-t,n-t}(\boldsymbol{\alpha}, \boldsymbol{v}, l, \eta_1, \lambda_1)$ over $\mathbb{F}_q$ with a generator matrix
\begin{align*}
    G_1 = \small{\begin{pmatrix}
        v_1 & v_2 & \cdots & v_n \\
        v_1 \alpha_1 & v_2 \alpha_2 & \cdots & v_n \alpha_n \\
        \vdots & \vdots & \ddots & \vdots \\
        v_1 \alpha_1^{l-1} & v_2 \alpha_2^{l-1} & \cdots & v_n \alpha_n^{l-1} \\
        v_1 \alpha_1^{l+1} & v_2 \alpha_2^{l+1} & \cdots & v_n \alpha_n^{l+1} \\
        \vdots & \vdots & \ddots & \vdots \\
        v_1 \alpha_1^{n-t-1} & v_2 \alpha_2^{n-t-1} & \cdots & v_n \alpha_n^{n-t-1} \\
        v_1(\lambda_1\alpha_1^{l}+\eta_1 \alpha_1^{n-t}) & v_2(\lambda_1\alpha_2^{l}+\eta_1 \alpha_2^{n-t}) & \cdots & v_n(\lambda_1\alpha_n^{l}+\eta_1 \alpha_n^{n-t})
    \end{pmatrix}},
\end{align*}
where $0 \le l \le n-t-1$, and either $\lambda_1\in \mathbb{F}_q$ or $\eta_1 \in \mathbb{F}_q$ is nonzero.
\end{Definition}

\begin{Definition}\label{deftgrs2}
For $0 \le n-t \le n$, we define the TGRS code $C_2=\TGRS_{n-t,-1}(\boldsymbol{\alpha}, \boldsymbol{v}, l, \eta_2, \lambda_2)$ over $\mathbb{F}_q$ with a generator matrix
\begin{align*}
    G_2 = \small{\begin{pmatrix}
    v_1 & v_2 & \cdots & v_n \\
    v_1 \alpha_1 & v_2 \alpha_2 & \cdots & v_n \alpha_n \\
    \vdots & \vdots & \ddots & \vdots \\
    v_1 \alpha_1^{l-1} & v_2 \alpha_2^{l-1} & \cdots & v_n \alpha_n^{l-1} \\
    v_1 \alpha_1^{l+1} & v_2 \alpha_2^{l+1} & \cdots & v_n \alpha_n^{l+1} \\
    \vdots & \vdots & \ddots & \vdots \\
    v_1 \alpha_1^{n-t-1} & v_2 \alpha_2^{n-t-1} & \cdots & v_n \alpha_n^{n-t-1} \\
    v_1(\lambda_2\alpha_1^{l}+\eta_2 \alpha_1^{-1}) & v_2(\lambda_2\alpha_2^{l}+\eta_2 \alpha_2^{-1}) & \cdots & v_n(\lambda_2\alpha_n^{l}+\eta_2 \alpha_n^{-1})
\end{pmatrix}},
\end{align*}
where $0 \le l \le n-t-1$, and either $\lambda_2\in \mathbb{F}_q$ or $\eta_2 \in \mathbb{F}_q$ is nonzero.
\end{Definition}
It is easy to see that $\TGRS_{n-t,n-t}( \boldsymbol{\alpha}, \boldsymbol{v}, l, \eta_1, \lambda_1)$ is a subcode of $\GRS_{n-t+1}( \boldsymbol{\alpha} ,\boldsymbol{v})$, and $\TGRS_{n-t,-1}( \boldsymbol{\alpha}, \boldsymbol{v}, l, \eta_2, \lambda_2)$ is a subcode of $\GRS_{n-t+1}( \boldsymbol{\alpha}, \boldsymbol{v} \cdot \boldsymbol{\alpha}^{-1})$.

\section{Parity-check matrices of TGRS codes}

For a code $C$ of length $n$ over $\mathbb{F}_q$, the dual code $C^{\bot}$ of $C$ is defined as $C^{\bot} = \{\bm{x} \in \mathbb{F}_q^n : \langle \bm{x},\bm{y} \rangle =
0$ for all $\bm{y} \in C\}$, where $\langle \bm{x},\bm{y} \rangle=\sum_{i=1}^nx_iy_i$ is the Euclidean (standard) inner product.

In this section, we determine the parity-check matrices of TGRS codes $C_1$ and $C_2$.
To obtain the general form of the parity-check matrices for $C_1$ and $C_2$,
we first present the well-known results for the parity-check matrix of a GRS code.

\begin{Proposition}
Assume the notation as given above. Then
$$\GRS_{t}(\boldsymbol\alpha,\boldsymbol{v})^\perp=\GRS_{n-t}(\boldsymbol{\alpha},\frac{\boldsymbol{u}}{\boldsymbol{v}}),$$
where $\boldsymbol{u} = (u_1, \ldots, u_n)$ with $u_i^{-1}= \prod_{j=1,j\ne i}^n(\alpha_i-\alpha_j), 1\le i \le n$.
\end{Proposition}

As we can see from the above,
$ \boldsymbol{u} \in \GRS_{n-2}(\boldsymbol\alpha,\boldsymbol{1})^\perp$.
Thus $\langle \boldsymbol{u}, \boldsymbol{\alpha}^i\rangle = 0$,
for $0 \le i \le n-2$ and $\langle \boldsymbol{u}, \boldsymbol{\alpha}^{n-1}\rangle \ne 0$.
If $\langle \boldsymbol{u}, \boldsymbol{\alpha}^{n-1}\rangle = 0$,
then it means that $\boldsymbol{u} \in \GRS_{n-1}(\boldsymbol\alpha,\boldsymbol{1})^\perp=\left(\mathbb{F}_q^n\right)^{\perp}$ and $\boldsymbol{u}=\boldsymbol{0}$.
This contradicts the definition of $\boldsymbol{u}$.
Similarly, when $\alpha_i$ is nonzero element of $\mathbb{F}_q (1\le i \le n)$, we have $\langle \boldsymbol{u}, \boldsymbol{\alpha}^{-1} \rangle \ne  0$.

In Definition \ref{deftgrs1}, when $\eta_1 = 0$, then $\TGRS_{n-t,n-t}(\boldsymbol{\alpha}, \boldsymbol{v}, l, 0, \lambda_1)$ is a GRS code.
Next, we consider the case $\eta_1 \ne 0$.

\begin{Theorem}\label{ch0}
    The code $\TGRS_{n-t,n-t}( \boldsymbol{\alpha}, \boldsymbol{v}, l, \eta_1, \lambda_1)(\eta_1 \ne 0, t > 1)$ has a parity-check matrix as follows:
$$
\small{\begin{pmatrix}
    \frac{u_1}{v_1} & \frac{u_2}{v_2} & \cdots & \frac{u_n}{v_n} \\
    \frac{u_1}{v_1} \alpha_1 & \frac{u_2}{v_2} \alpha_2 & \cdots & \frac{u_n}{v_n} \alpha_n \\
    \vdots & \vdots & \ddots & \vdots \\
    \frac{u_1}{v_1} \alpha_1^{t-2} & \frac{u_2}{v_2} \alpha_2^{t-2} & \cdots & \frac{u_n}{v_n} \alpha_n^{t-2} \\
    \frac{u_1}{v_1}(\alpha_1^{t-1} + f(\alpha_1)) & \frac{u_2}{v_2}(\alpha_2^{t-1} + f(\alpha_2)) & \cdots & \frac{u_n}{v_n}(\alpha_n^{t-1} + f(\alpha_n))
\end{pmatrix}},
$$
where
\begin{align}\label{ff0}
    f(x)=x^{n-l-1}+a_{n-l-2} x^{n-l-2}+\cdots+a_{t} x^{t}+a_{t-1}x^{t-1}\in \mathbb{F}_q[x]
\end{align}
with
\begin{align}\label{coe1}
    a_{n-l-1}=1,a_{n-l-2-r}=-\frac{\sum_{j=0}^{r} a_{n-l-1-j} \sum_{i=1}^n u_i \alpha_i^{n+r-j}}{\sum_{i=1}^n u_i \alpha_i^{n-1}}, \ \text{for} \ 0 \le r \le n-t-l-2,
\end{align}
and \[a_{t-1}=-\frac{\eta_1\sum_{j=0}^{n-t-l-1} a_{n-l-1-j} \sum_{i=1}^n u_i \alpha_i^{2n-t-l-1-j} + \lambda_1a_{n-l-1}\sum_{i=1}^nu_i\alpha_i^{n-1}}{\eta_1\sum_{i=1}^n u_i \alpha_i^{n-1}}-1.\]

\begin{proof}
	We know that $\langle \boldsymbol{u}, \boldsymbol{\alpha}^s  \rangle = 0$, for $0 \le s \le n-2$.
Thus $\langle \frac{\boldsymbol{u}}{\boldsymbol{v}}\boldsymbol{\alpha}^i, \boldsymbol{v}\boldsymbol{\alpha}^j \rangle =0$, for $0 \le i \le t-2$, and $0 \le j \le n-t$.
Therefore, $\frac{\boldsymbol{u}}{\boldsymbol{v}}\boldsymbol{\alpha}^i \in \TGRS_{n-t,n-t}( \boldsymbol{\alpha}, \boldsymbol{v}, l, \eta_1, \lambda_1)^{\bot}$, for $0 \le i \le t-2$.
We may consider non-zero polynomials of the form $f_1(x)=a_{t-1}x^{t-1}+\cdots+a_{n-1}x^{n-1}$, and then assume that $(\frac{u_1}{v_1}f_1(\alpha_1),..., \frac{u_n}{v_n}f_1(\alpha_n)) \in \TGRS_{n-t,n-t}( \boldsymbol{\alpha}, \boldsymbol{v}, l, \eta_1, \lambda_1)^{\bot}$.

The vector $\left(\frac{u_1}{v_1} f_1\left(\alpha_1\right), \cdots, \frac{u_n}{v_n} f_1\left(\alpha_n\right)\right)$ belongs to $\TGRS_{n-t,n-t}(\boldsymbol{\alpha}, \boldsymbol{v}, l, \eta_1, \lambda_1)^{\perp}$ if and only if the following system of equalities holds:
$$
\left\{\begin{array}{l}
\sum_{i=1}^n \frac{u_i}{v_i} f_1\left(\alpha_i\right) v_i=0, \\
\cdots \\
\sum_{i=1}^n \frac{u_i}{v_i} f_1\left(\alpha_i\right) v_i \alpha_i^{l-1}=0, \\
\sum_{i=1}^n \frac{u_i}{v_i} f_1\left(\alpha_i\right) v_i \alpha_i^{l+1}=0, \\
\cdots \\
\sum_{i=1}^n \frac{u_i}{v_i} f_1\left(\alpha_i\right) v_i \alpha_i^{n-t-1}=0, \\
\sum_{i=1}^n \frac{u_i}{v_i} f_1\left(\alpha_i\right) v_i\left(\lambda_1\alpha_i^{l}+\eta_1 \alpha_i^{n-t}\right)=0.
\end{array}\right.
$$
Since $\alpha_i \in \mathbb{F}_q^*(1 \le i \le n)$,
we can deduce that
$$
\left\{\begin{array}{l}
	a_{n-1}\sum_{i=1}^n u_i \alpha_i^{n-1} = 0, \\
	a_{n-2}\sum_{i=1}^{n-1} u_i \alpha_i^{n-1} + a_{n-1} \sum_{i=1}^{n-1} u_i \alpha_i^{n} = 0, \\
	\cdots\\
	a_{n-l}\sum_{i=1}^{n-1} u_i \alpha_i^{n-1} + a_{n-l+1}\sum_{i=1}^n u_i \alpha_i^{n} + \cdots + a_{n-1} \sum_{i=1}^n u_i \alpha_i^{n+l-2} = 0.
\end{array}\right.
$$
Then, we have
\[
	a_{n-1}=a_{n-2}= \cdots = a_{n-l}=0, f_1(x)=a_{t-1}x^{t-1}+\cdots+a_{n-l-1}x^{n-l-1},
\]
and
$$
\begin{cases}
    a_{n-l-1} \sum_{i=1}^n u_i \alpha_i^n+a_{n-l-2} \sum_{i=1}^n u_i \alpha_i^{n-1}=0, \\
a_{n-l-1} \sum_{i=1}^n u_i \alpha_i^{n+1}+a_{n-l-2} \sum_{i=1}^n u_i \alpha_i^n+a_{n-l-3} \sum_{i=1}^n u_i \alpha_i^{n-1}=0, \\
\cdots \\
a_{n-l-1} \sum_{i=1}^n u_i \alpha_i^{2n-t-l-2}+a_{n-l-2} \sum_{i=1}^n u_i \alpha_i^{2n-t-l-3}+\cdots+a_{t} \sum_{i=1}^n u_i \alpha_i^{n-1}=0, \\
\lambda_1a_{n-l-1} \sum_{i=1}^n u_i \alpha_i^{n-1} + \eta_1(a_{n-l-1} \sum_{i=1}^n u_i \alpha_i^{2n-t-l-1}+\cdots+a_{t-1} \sum_{i=1}^n u_i \alpha_i^{n-1})=0.
\end{cases}
$$

Note that $a_{n-l-1} \neq 0$. So we can assume $a_{n-l-1}=1$ by the linearity.
Since $\sum_{i=1}^n u_i \alpha_i^{n-1} \neq 0$, if $a_{n-l-1}=0$, then it follows from the first equality that $a_{n-l-2}=0$.
As a consequence of $a_{n-l-1}=a_{n-l-2}=0$, we have $a_{n-l-3}=0$ from the second equality.
Similarly, we can get $a_{n-l-4}=\cdots=a_{t-1}=0$ and hence $f_1(x)=0$,
which contradicts the assumption that $f_1(x)$ is non-zero.

So by solving the above system of equations, and by the assumption $a_{n-l-1}=1$,
we can obtain that the elements $a_i$ indeed satisfy the condition (\ref{coe1})
and
$$
    a_{t-1}=-\frac{\eta_1\sum_{j=0}^{n-t-l-1} a_{n-l-1-j} \sum_{i=1}^n u_i \alpha_i^{2n-t-l-1-j} + \lambda_1a_{n-l-1}\sum_{i=1}^nu_i\alpha_i^{n-1}}{\eta_1\sum_{i=1}^n u_i \alpha_i^{n-1}}.
$$
Let $f(x) = f_1(x) - x^{t-1}$. Then this completes the proof.
\end{proof}
\end{Theorem}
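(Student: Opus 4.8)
The plan is to exhibit a basis of the dual code $C_1^{\perp}=\TGRS_{n-t,n-t}(\boldsymbol\alpha,\boldsymbol v,l,\eta_1,\lambda_1)^{\perp}$, since a parity-check matrix of $C_1$ is exactly a generator matrix of $C_1^{\perp}$. Because the $n-t$ generating polynomials of $C_1$ (the monomials $x^{j}$ for $0\le j\le n-t-1$, $j\ne l$, together with the twist $\lambda_1 x^{l}+\eta_1 x^{n-t}$) are linearly independent and of degree at most $n-t\le n-1$, the evaluation map is injective and $\dim C_1=n-t$, so $\dim C_1^{\perp}=t$. Hence I need to produce $t$ linearly independent codewords of $C_1^{\perp}$ and identify them with the claimed rows.

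First I would account for the top $t-1$ rows. By the GRS dual formula (the Proposition above) and multiplicativity of the inner product, $\langle \frac{\boldsymbol u}{\boldsymbol v}\boldsymbol\alpha^{i},\boldsymbol v\boldsymbol\alpha^{j}\rangle=\langle\boldsymbol u,\boldsymbol\alpha^{i+j}\rangle$. Since $\langle\boldsymbol u,\boldsymbol\alpha^{s}\rangle=0$ for all $0\le s\le n-2$, and every generator of $C_1$ is a combination of the $\boldsymbol v\boldsymbol\alpha^{j}$ with $0\le j\le n-t$, each vector $\frac{\boldsymbol u}{\boldsymbol v}\boldsymbol\alpha^{i}$ with $0\le i\le t-2$ pairs to zero against every generator, including the twisted one (there $i+l\le n-3$ and $i+(n-t)\le n-2$, so both terms vanish). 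This gives the first $t-1$ rows for free.

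It remains to find one further dual vector independent from these; it must come from a polynomial of degree between $t-1$ and $n-1$. I would write $f_1(x)=a_{t-1}x^{t-1}+\cdots+a_{n-1}x^{n-1}$ and impose $\langle\frac{\boldsymbol u}{\boldsymbol v}f_1(\boldsymbol\alpha),\boldsymbol v\boldsymbol\alpha^{j}\rangle=0$ for $0\le j\le n-t-1$, $j\ne l$, together with the twisted condition against $\lambda_1\boldsymbol\alpha^{l}+\eta_1\boldsymbol\alpha^{n-t}$. Expanding each inner product as $\sum_i u_i f_1(\alpha_i)\alpha_i^{j}$ and discarding the vanishing moments $\langle\boldsymbol u,\boldsymbol\alpha^{s}\rangle=0$ ($s\le n-2$), only the high-degree moments $\sum_i u_i\alpha_i^{m}$ with $m\ge n-1$ survive, turning the constraints into a triangular linear system in $a_{t-1},\dots,a_{n-1}$. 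Reading the lowest constraints first forces $a_{n-1}=\cdots=a_{n-l}=0$, cutting the degree of $f_1$ down to $n-l-1$; after normalizing $a_{n-l-1}=1$, the middle constraints ($j=l+1,\dots,n-t-1$) solve recursively to yield the coefficients in (\ref{coe1}), and the final twisted constraint pins down $a_{t-1}$. Setting $f(x)=f_1(x)-x^{t-1}$ rewrites the last row as $\frac{u_i}{v_i}(\alpha_i^{t-1}+f(\alpha_i))=\frac{u_i}{v_i}f_1(\alpha_i)$, matching the claim.

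The main obstacle is justifying that this extra row genuinely exists and is nonzero, i.e.\ that the triangular system is nonsingular and the normalization $a_{n-l-1}=1$ is legitimate. This hinges on the crucial nonvanishing $\langle\boldsymbol u,\boldsymbol\alpha^{n-1}\rangle\ne0$ noted after the Proposition: it is precisely the repeated pivot of the triangular system, and a short descent argument shows that if the leading coefficient vanished then every $a_i$ would cascade to zero, forcing $f_1=0$, a contradiction. Linear independence of the full set of $t$ rows then follows because $1,x,\dots,x^{t-2},f_1(x)$ have distinct degrees and the $\alpha_i$ are distinct. The only genuinely delicate bookkeeping is tracking which high-degree moments $\sum_i u_i\alpha_i^{m}$ survive in each constraint so as to read off the recursion cleanly.
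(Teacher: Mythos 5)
Your proposal is correct and follows essentially the same route as the paper: the first $t-1$ dual rows come from $\tfrac{\boldsymbol u}{\boldsymbol v}\boldsymbol\alpha^{i}$ ($0\le i\le t-2$) via the vanishing moments $\langle\boldsymbol u,\boldsymbol\alpha^{s}\rangle=0$ for $s\le n-2$, and the last row is obtained by solving the same triangular system for $f_1(x)=a_{t-1}x^{t-1}+\cdots+a_{n-1}x^{n-1}$, with the pivot $\langle\boldsymbol u,\boldsymbol\alpha^{n-1}\rangle\ne0$ and the same cascade argument justifying the normalization $a_{n-l-1}=1$, followed by $f(x)=f_1(x)-x^{t-1}$. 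The only additions are your explicit dimension count and linear-independence remark, which the paper leaves implicit.
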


In Definition~\ref{deftgrs2}, when $\eta_2 = 0$, then $\TGRS_{n-t,-1}(\boldsymbol{\alpha}, \boldsymbol{v}, l, 0, \lambda)$ is a GRS code.
When $\lambda_2 = 0$, we may assume $\eta_2=1$ by the linearity.
In this case, it is easy to see $\TGRS_{n-t,-1}(\boldsymbol{\alpha}, \boldsymbol{v}, l, 1, 0)$ is equal to $\TGRS_{n-t,n-t}(\boldsymbol{\alpha}, \boldsymbol{v}\cdot\boldsymbol{\alpha}^{-1}, l, 1, 0)$.
Next, we consider the case where $\lambda_2 \ne 0$ and $\eta_2 \ne 0$.

\begin{Theorem}\label{ch1}
    The code $\TGRS_{n-t,-1}( \boldsymbol{\alpha}, \boldsymbol{v}, l, \eta_2, \lambda_2)(\eta_2 \ne 0, t > 1)$ has a parity-check matrix
$$
\small{\begin{pmatrix}
    \frac{u_1}{v_1} \alpha_1 & \frac{u_2}{v_2} \alpha_2 & \cdots & \frac{u_n}{v_n} \alpha_n \\
    \vdots & \vdots & \ddots & \vdots \\
    \frac{u_1}{v_1} \alpha_1^{t-1} & \frac{u_2}{v_2} \alpha_2^{t-1} & \cdots & \frac{u_n}{v_n} \alpha_n^{t-1} \\
    \frac{u_1}{v_1}(\alpha_1^{t} + f(\alpha_1)) & \frac{u_2}{v_2}(\alpha_2^{t} + f(\alpha_2)) & \cdots & \frac{u_n}{v_n}(\alpha_n^{t} + f(\alpha_n))
\end{pmatrix}},
$$
where
\begin{align}\label{ff1}
    f(x)=x^{n-l-1}+a_{n-l-2} x^{n-l-2}+\cdots+a_{t} x^{t}+a_0\in \mathbb{F}_q[x]
\end{align}
with
\begin{align}\label{coe2}
    a_{n-l-1}=1,a_{n-l-2-r}=-\frac{\sum_{j=0}^{r} a_{n-l-1-j} \sum_{i=1}^n u_i \alpha_i^{n+r-j}}{\sum_{i=1}^n u_i \alpha_i^{n-1}}, \ \text{for} \  0 \le r \le n-l-t-3,
\end{align}
$$a_t=-\frac{\sum_{j=0}^{n-l-t-2} a_{n-l-1-j} \sum_{i=1}^n u_i \alpha_i^{2n-l-t-2-j}}{\sum_{i=1}^n u_i \alpha_i^{n-1}}-1, \ \text{and} \ a_{0}=-\frac{\lambda_2  a_{n-l-1}\sum_{i=1}^nu_i\alpha_i^{n-1}}{\eta_2\sum_{i=1}^n u_i \alpha_i^{-1}}.$$

\end{Theorem}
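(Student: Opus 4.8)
The plan is to mirror the proof of Theorem~\ref{ch0}, adapting it to the fact that the twist in $C_2$ involves $\boldsymbol{\alpha}^{-1}$ rather than $\boldsymbol{\alpha}^{n-t}$, so that it is the dual of $\GRS_{n-t+1}(\boldsymbol{\alpha},\boldsymbol{v}\cdot\boldsymbol{\alpha}^{-1})$ (not of $\GRS_{n-t+1}(\boldsymbol{\alpha},\boldsymbol{v})$) that governs the computation. I first record the relations I will use repeatedly: writing $M_m:=\sum_{i=1}^n u_i\alpha_i^m=\langle\boldsymbol{u},\boldsymbol{\alpha}^m\rangle$, the Proposition and the discussion following it give $M_m=0$ for $0\le m\le n-2$, together with $M_{n-1}\ne 0$, and, since all $\alpha_i$ are nonzero in the setting of $C_2$, also $M_{-1}\ne 0$.

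For the first $t-1$ rows I claim $\tfrac{\boldsymbol{u}}{\boldsymbol{v}}\boldsymbol{\alpha}^i\in C_2^{\perp}$ for $1\le i\le t-1$. Pairing such a vector with a plain generator row $\boldsymbol{v}\boldsymbol{\alpha}^j$, where $j\in\{0,\dots,n-t-1\}\setminus\{l\}$, produces $M_{i+j}$ with $1\le i+j\le n-2$, hence $0$; pairing with the twisted row $\boldsymbol{v}(\lambda_2\boldsymbol{\alpha}^l+\eta_2\boldsymbol{\alpha}^{-1})$ produces $\lambda_2 M_{i+l}+\eta_2 M_{i-1}$ with $1\le i+l\le n-2$ and $0\le i-1\le t-2$, again $0$. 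This accounts for the top $t-1$ rows of the asserted parity-check matrix.

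Since $\dim C_2=n-t$, I need exactly one further independent dual vector, which I seek in the form $\tfrac{\boldsymbol{u}}{\boldsymbol{v}}f_1(\boldsymbol{\alpha})$ with $f_1(x)=a_0+\sum_{k=t}^{n-1}a_k x^k$: the exponents $1,\dots,t-1$ are omitted because they are already realized in the previous step (so may be subtracted off), while the constant term $a_0$ must be retained because the $\boldsymbol{\alpha}^{-1}$ twist couples it into the system. Imposing orthogonality to each generator row converts into a triangular system in the $M_m$: the rows $\boldsymbol{v}\boldsymbol{\alpha}^{0},\dots,\boldsymbol{v}\boldsymbol{\alpha}^{l-1}$ force $a_{n-1}=\dots=a_{n-l}=0$; after normalizing $a_{n-l-1}=1$, the rows $\boldsymbol{v}\boldsymbol{\alpha}^{l+1},\dots,\boldsymbol{v}\boldsymbol{\alpha}^{n-t-2}$ yield the backward recursion (\ref{coe2}) for $a_{n-l-2},\dots,a_{t+1}$; the row $\boldsymbol{v}\boldsymbol{\alpha}^{n-t-1}$, in which every surviving coefficient contributes, yields the displayed value of $a_t$; and the twisted row collapses, since only the leading term survives in its $\lambda_2$ part and only the constant term in its $\eta_2$ part, to the single relation $\lambda_2 a_{n-l-1}M_{n-1}+\eta_2 a_0 M_{-1}=0$, giving $a_0$. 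Finally, setting $f(x)=f_1(x)-x^{t}$ rewrites the last row as $\tfrac{\boldsymbol{u}}{\boldsymbol{v}}(\boldsymbol{\alpha}^{t}+f(\boldsymbol{\alpha}))$ and is exactly what produces the extra $-1$ in the coefficient $a_t$.

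It remains to justify the normalization and independence: as in Theorem~\ref{ch0}, if $a_{n-l-1}=0$ then the recursion forces $a_{n-l-2}=\dots=a_t=0$ and the twist relation forces $a_0=0$, contradicting $f_1\ne 0$, so $a_{n-l-1}\ne 0$ and the normalization is legitimate; moreover $f_1$ has leading degree $n-l-1\ge t$, so the last row is independent of the $t-1$ rows of Step~1 and the $t$ rows have rank $t$. The genuinely delicate point, and the only essential departure from Theorem~\ref{ch0}, is the choice of ansatz for $f_1$: recognizing that the omitted exponents are precisely $1,\dots,t-1$ and that the constant term $a_0$ must be reinstated because the $\boldsymbol{\alpha}^{-1}$ twist pairs it against $M_{-1}\ne 0$, so that the twist equation decouples into one clean relation determining $a_0$.
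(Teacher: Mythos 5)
Your proposal is correct and follows essentially the same route as the paper: exhibit $\frac{\boldsymbol{u}}{\boldsymbol{v}}\boldsymbol{\alpha}^k$ ($1\le k\le t-1$) as dual vectors, seek the last dual row as $\frac{\boldsymbol{u}}{\boldsymbol{v}}f_1(\boldsymbol{\alpha})$ with $f_1(x)=a_0+\sum_{k=t}^{n-1}a_kx^k$, solve the resulting triangular system after normalizing $a_{n-l-1}=1$ (justified exactly as you do, by showing $a_{n-l-1}=0$ forces $f_1=0$), and finish with $f(x)=f_1(x)-x^t$. If anything, you are slightly more explicit than the paper about why the twisted row pairs to zero against the first $t-1$ dual rows and why it decouples into the single relation $\lambda_2 a_{n-l-1}\langle\boldsymbol{u},\boldsymbol{\alpha}^{n-1}\rangle+\eta_2 a_0\langle\boldsymbol{u},\boldsymbol{\alpha}^{-1}\rangle=0$.
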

\begin{proof}
	We know that $\langle \boldsymbol{u}, \boldsymbol{\alpha}^s\rangle = 0$, for $0 \le s \le n-2$.
Thus $\langle \frac{\boldsymbol{u}}{\boldsymbol{v}}\boldsymbol{\alpha}^i, \boldsymbol{v}\boldsymbol{\alpha}^j \rangle =0$, for $0 \le i \le t-2$ and $0 \le j \le n-t$.
Therefore, $\frac{\boldsymbol{u}}{\boldsymbol{v}}\boldsymbol{\alpha}^k \in \TGRS_{n-t, -1}( \boldsymbol{\alpha}, \boldsymbol{v}, l, \eta_2, \lambda_2)^{\bot}, 1\le k \le t-1$.
We may consider non-zero polynomials of the form $f_1(x)=a_0+a_{t}x^{t}+\cdots+a_{n-1}x^{n-1}$,
and then assume that $(\frac{u_1}{v_1}f_1(\alpha_1),..., \frac{u_n}{v_n}f_1(\alpha_n)) \in \TGRS_{n-t, -1}( \boldsymbol{\alpha}, \boldsymbol{v}, l, \eta_2, \lambda_2)^{\bot}$.

The vector $\left(\frac{u_1}{v_1} f_1\left(\alpha_1\right), \cdots, \frac{u_n}{v_n} f_1\left(\alpha_n\right)\right)$ belongs to $\TGRS_{n-t, -1}(\boldsymbol{\alpha}, \boldsymbol{v}, l, \eta_2, \lambda_2)^{\perp}$ if and only if the following system of equalities holds
$$
\left\{\begin{array}{l}
\sum_{i=1}^n \frac{u_i}{v_i} f_1\left(\alpha_i\right) v_i=0, \\
\cdots \\
\sum_{i=1}^n \frac{u_i}{v_i} f_1\left(\alpha_i\right) v_i \alpha_i^{l-1}=0, \\
\sum_{i=1}^n \frac{u_i}{v_i} f_1\left(\alpha_i\right) v_i \alpha_i^{l+1}=0, \\
\cdots \\
\sum_{i=1}^n \frac{u_i}{v_i} f_1\left(\alpha_i\right) v_i \alpha_i^{n-t-1}=0, \\
\sum_{i=1}^n \frac{u_i}{v_i} f_1\left(\alpha_i\right) v_i\left(\lambda_2\alpha_i^{l}+\eta_2 \alpha_i^{-1}\right)=0 .
\end{array}\right.
$$
Since $\alpha_i \in \mathbb{F}_q^*(1 \le i \le n)$, we can deduce that
$$
\begin{cases}
    a_{n-1}\sum_{i=1}^n u_i \alpha_i^{n-1} = 0, \\
	a_{n-2}\sum_{i=1}^{n-1} u_i \alpha_i^{n-1} + a_{n-1} \sum_{i=1}^{n-1} u_i \alpha_i^{n} = 0, \\
	\cdots\\
	a_{n-l}\sum_{i=1}^{n-1} u_i \alpha_i^{n-1} + a_{n-l+1}\sum_{i=1}^n u_i \alpha_i^{n} + \cdots + a_{n-1} \sum_{i=1}^n u_i \alpha_i^{n+l-2} = 0.
\end{cases}
$$
Then, we have
$$
	a_{n-1}=a_{n-2}= \cdots = a_{n-l}=0, f_1(x)=a_0 + a_{t}x^{t}+\cdots+a_{n-l-1}x^{n-l-1},
$$
and
$$
\begin{cases}
a_{n-l-1} \sum_{i=1}^n u_i \alpha_i^n+a_{n-l-2} \sum_{i=1}^n u_i \alpha_i^{n-1}=0, \\
a_{n-l-1} \sum_{i=1}^n u_i \alpha_i^{n+1}+a_{n-l-2} \sum_{i=1}^n u_i \alpha_i^n+a_{n-l-3} \sum_{i=1}^n u_i \alpha_i^{n-1}=0, \\
\cdots \\
a_{n-l-1} \sum_{i=1}^n u_i \alpha_i^{2n-t-l-2}+a_{n-l-2} \sum_{i=1}^n u_i \alpha_i^{2n-t-l-3}+\cdots+a_{t} \sum_{i=1}^n u_i \alpha_i^{n-1}=0, \\
\lambda_2a_{n-l-1} \sum_{i=1}^n u_i \alpha_i^{n-1} + \eta_2a_0\sum_{i=1}^nu_i\alpha_i^{-1}=0.
\end{cases}
$$

Note that $a_{n-l-1} \neq 0$. So we can assume $a_{n-l-1}=1$ by the linearity.
Since $\sum_{i=1}^n u_i \alpha_i^{n-1} \neq 0$, if $a_{n-l-1}=0$, then it follows from the first equality that $a_{n-l-2}=0$.
As a consequence of $a_{n-l-1}=a_{n-l-2}=0$, we have $a_{n-l-3}=0$ from the second equality.
Similarly, we can get $a_{n-l-4}=\cdots=a_{t}=a_0=0$ and hence $f_1(x)=0$, which contradicts the assumption that $f_1(x)$ is non-zero.

So by solving the above system of equations, and by assumption $a_{n-l-1}=1$,
we can obtain that the elements $a_i$ indeed satisfy the condition (\ref{coe2}) and
\[
    a_t=-\frac{\sum_{j=0}^{n-l-t-2} a_{n-l-1-j} \sum_{i=1}^n u_i \alpha_i^{2n-l-t-2-j}}{\sum_{i=1}^n u_i \alpha_i^{n-1}}, a_{0}=-\frac{\lambda_2  a_{n-l-1}\sum_{i=1}^nu_i\alpha_i^{n-1}}{\eta_2\sum_{i=1}^n u_i \alpha_i^{-1}}.
\]
Let $f(x) = f_1(x) - x^t$. This completes the proof.
\end{proof}

We provide here the general forms of the parity-check matrices for codes $C_1$ and $C_2$,
and we will utilize these matrices in subsequent steps for decoding.

\begin{Remark}
    Based on the results discussed above,
it can be concluded that $\TGRS_{n-t,n-t}( \boldsymbol{\alpha}, \boldsymbol{v}, l, \eta_1, \lambda_1)$ and $\TGRS_{n-t,-1}( \boldsymbol{\alpha}, \boldsymbol{v}, l, \eta_2, \lambda_2)$ are either MDS codes or almost-MDS codes.
\end{Remark}

\section{Decoding}
In \cite{Sun1}, Sun et al. discussed the decoding issues associated with two specific MDS TGRS codes.
However, a notable limitation is the overly stringent conditions that must be met for TGRS codes to be classified as MDS codes ({see \cite[Lemma 2.2]{Sun1}}).
To address this limitation, we have embarked on research aimed at decoding a more general range of TGRS codes,
adopting distinct processing strategies for MDS and almost-MDS TGRS codes respectively.

From now on, we always assume that $\alpha_1,...,\alpha_n$ are all distinct nonzero elements of $\mathbb{F}_q$.
In this section, we consider the decoding of $\TGRS_{n-t,n-t}(\boldsymbol{\alpha}, \boldsymbol{v}, l, \eta_1, \lambda_1)$ and
$\TGRS_{n-t,-1}(\boldsymbol{\alpha}, \boldsymbol{v}, l, \eta_2, \lambda_2)$.
Firstly, we focus on the decoding of a more general class of TGRS codes.

Let $C$ be an $[n, n-t, d]$, ($d = t$ or $d=t+1$) TGRS code with parity-check matrix as
\begin{align}\label{parh}
    H =\begin{pmatrix}
        w_1 & w_2 & \cdots & w_n \\
        w_1 \alpha_1 & w_2 \alpha_2 & \cdots & w_n \alpha_n \\
        \vdots & \vdots & \ddots & \vdots \\
        w_1 \alpha_1^{t-2} & w_2 \alpha_2^{t-2} & \cdots & w_n \alpha_n^{t-2} \\
        w_1(\alpha_1^{t-1} + f(\alpha_1)) & w_2(\alpha_2^{t-1} + f(\alpha_2)) & \cdots & w_n(\alpha_n^{t-1} + f(\alpha_n))
    \end{pmatrix},
\end{align}
where $f(x) \in \mathbb{F}_q[x]$, and $w_1, \ldots, w_n$ are nonzero elements of $\mathbb{F}_q$.

\begin{Remark}\label{equrem}
    According to Theorem~\ref{ch0}, when $\boldsymbol{w}=(w_1,...,w_n)$ is set to $(\frac{u_1}{v_1},...,\frac{u_n}{v_n})$ and $f(x)$ is as given in (\ref{ff0}),
then the code $C$ is equal to $\TGRS_{n-t,n-t}(\boldsymbol{\alpha}, \boldsymbol{v}, l, \eta_1, \lambda_1)$ code.
From Theorem~\ref{ch1}, when $\boldsymbol{w}=(w_1,...,w_n)$ is set to $(\frac{u_1}{v_1}\cdot \alpha_1,...,\frac{u_n}{v_n}\cdot\alpha_n)$ and $f(x)$ is taken as $x^{q-2}\cdot f(x)$ as given in (\ref{ff1}),
then the code $C$ is equal to $\TGRS_{n-t,-1}(\boldsymbol{\alpha}, \boldsymbol{v}, l, \eta_2, \lambda_2)$ code.
\end{Remark}

It is evident that the code $C$ is either an MDS code or an almost-MDS code.
We shall give the key equations of $C$ for decoding.

Let $\boldsymbol{r}=\left(r_1, \cdots, r_n\right)$ be a received word with
    $\boldsymbol{r}=\boldsymbol{c}+\boldsymbol{e}$,
where $\boldsymbol{c}=\left(c_1, \cdots, c_n\right)$ is a codeword of $C$,
$\boldsymbol{e}=\left(e_1, \cdots, e_n\right)$ is an error word,
and $J=\left\{j\,|\, 1 \le j \le n, e_j \neq 0\right\}$ is called the error location set with $|J| \le \lfloor \frac{d-1}{2} \rfloor$,
where  $|J|$ denotes the number of elements in the set $J$.

\textbf{Case 1:} $d=t$, i.e., $C$ is an almost-MDS code, or $d=t+1$ and $t$ is odd.

In these two situations, we can use a submatrix $H_1$ of the parity-check matrix $H$ of $C$ for decoding,
where
\begin{align}\label{submala}
    H_1=\begin{pmatrix}
        w_1 & w_2 & \cdots & w_n \\
        w_1 \alpha_1 & w_2 \alpha_2 & \cdots & w_n \alpha_n \\
        \vdots & \vdots & \ddots & \vdots \\
        w_1 \alpha_1^{t-2} & w_2 \alpha_2^{t-2} & \cdots & w_n \alpha_n^{t-2}
    \end{pmatrix}.
\end{align}
Let the syndrome of $\boldsymbol{r}$ be
$$
\boldsymbol{s}=\begin{pmatrix}
s_0 \\
s_1 \\
\vdots \\
s_{t-2}
\end{pmatrix}
=H_1 \boldsymbol{r}^T=H_1 \boldsymbol{c}^T+H_1 \boldsymbol{e}^T=H_1 \boldsymbol{e}^T,
$$
where
$$
 s_i=\sum_{j \in J} e_jw_j\alpha_j^i, 0 \le i \le t-2.
$$
Define the syndrome polynomial $S(x)$ of the received word $\boldsymbol{r}$:
\begin{align}
    \begin{aligned} \label{ss0}
        S(x) & =\sum_{i=0}^{t-2} s_i x^i=\sum_{i=0}^{t-2} \sum_{j \in J} e_jw_j  \alpha_j^i x^i \\
        & = \sum_{j \in J}\sum_{i=0}^{t-2} e_j w_j \alpha_j^{i} x^i \\
        & =\sum_{j \in J} e_jw_j  \frac{1-(\alpha_jx)^{t-1}}{1-(\alpha_jx)} \\
        & \equiv-\sum_{j \in J} e_jw_j\frac{\alpha_j^{-1}}{x-\alpha_j^{-1}} \pmod{x^{t-1}}.
    \end{aligned}
\end{align}
The error location polynomial is
$$
\sigma(x)=\prod_{j \in J}\left(x-\alpha_j^{-1}\right)
$$
and the error evaluator polynomial is
$$
\begin{aligned}
\tau(x) & =\left(-\sum_{i \in J} e_iw_i\frac{\alpha_i^{-1}}{x-\alpha_i^{-1}}\right) \prod_{j \in J}\left(x-\alpha_j^{-1}\right) \\
& =-\sum_{i \in J} e_iw_i \alpha_i^{-1} \prod_{j \in J \backslash\{i\}}\left(x-\alpha_j^{-1}\right) .
\end{aligned}
$$
Then
\begin{align}\label{eq0}
    S(x) \sigma(x) \equiv \tau(x) \pmod{x^{t-1}}.
\end{align}
It is clear that
\begin{align}\label{eq1}
    \gcd(\sigma(x), \tau(x))=1, \deg \tau(x) < \deg \sigma(x)=|J| \le \lfloor \frac{d-1}{2} \rfloor.
\end{align}
For each $i \in J$,
\[
    \tau(\alpha_i^{-1})=-e_iw_i\alpha_i^{-1} \prod_{j \in J\backslash\{i\}}\left(\alpha_i^{-1}-\alpha_j^{-1}\right)=-e_iw_i\alpha_i^{-1}\sigma'(\alpha_i^{-1}), e_i=-\frac{\alpha_i\tau(\alpha_i^{-1})}{w_i\sigma'(\alpha_i^{-1})},
\]
where $\sigma'(x)$ is the formal derivative of $\sigma(x)$.
\begin{Theorem}\label{unithm1}
    Let $C$ be a TGRS $[n, n-t, d]$ code with $d=t$ or ($d=t+1$ and $t$ is odd).
Let $\boldsymbol{r}$ be a received word with $d(\boldsymbol{r}, C) \le \lfloor \frac{d-1}{2} \rfloor$ and $S(x)$ the syndrome polynomial of $\boldsymbol{r}$ as (\ref{ss0}).
Then there is a unique polynomial pair $(\sigma(x), \tau(x))$ in Equations (\ref{eq0})-(\ref{eq1}) up to the leading coefficient of $\sigma(x)$.

\begin{proof}
    Assume there exist two pairs $(\sigma^{(1)}(x), \tau^{(1)}(x))$ and $(\sigma^{(2)}(x), \tau^{(2)}(x))$ that satisfy
Equations (\ref{eq0})-(\ref{eq1}). i.e.,
\begin{align*}
    S(x) \sigma^{(1)}(x) \equiv \tau^{(1)}(x) \pmod{x^{t-1}}, S(x) \sigma^{(2)}(x) \equiv \tau^{(2)}(x) \pmod{x^{t-1}}.
\end{align*}
Given that $\sigma^{(1)}(x)\ne 0$ and $\sigma^{(2)}(x)\ne 0$, then
\[
    \sigma^{(2)}(x)\tau^{(1)}(x)  \equiv \sigma^{(1)}(x)\tau^{(2)}(x) \pmod{x^{t-1}}.
\]
Since $\deg(\tau^{(1)}(x)) < \deg(\sigma^{(1)}(x)) \le \lfloor \frac{d-1}{2} \rfloor = \lfloor \frac{t-1}{2} \rfloor$ and $\deg(\tau^{(2)}(x)) < \deg(\sigma^{(2)}(x)) \le \lfloor \frac{d-1}{2} \rfloor= \lfloor \frac{t-1}{2} \rfloor$,
we have
\[
    \sigma^{(2)}(x)\tau^{(1)}(x)  = \sigma^{(1)}(x)\tau^{(2)}(x).
\]
Moreover, note that $\gcd(\sigma^{(1)}(x), \tau^{(1)}(x))=1$ and $\gcd(\sigma^{(2)}(x), \tau^{(2)}(x))=1$, we conclude that
\[
    \sigma^{(1)}(x)=\lambda \sigma^{(2)}(x), \tau^{(1)}(x) =\lambda \tau^{(2)}(x), \lambda \in \mathbb{F}_q^*.
\]
Therefore, up to the leading coefficient of $\sigma(x)$, there is a unique pair $(\sigma(x), \tau(x))$.
\end{proof}
\end{Theorem}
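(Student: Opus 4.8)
The plan is to establish uniqueness directly: suppose two pairs $(\sigma^{(1)}(x),\tau^{(1)}(x))$ and $(\sigma^{(2)}(x),\tau^{(2)}(x))$ both satisfy the key equation (\ref{eq0}) together with the constraints (\ref{eq1}), and deduce that they coincide up to a common nonzero scalar. My first move is to eliminate the unknown syndrome polynomial $S(x)$. Multiplying the congruence $S(x)\sigma^{(1)}(x)\equiv\tau^{(1)}(x)$ by $\sigma^{(2)}(x)$ and the congruence $S(x)\sigma^{(2)}(x)\equiv\tau^{(2)}(x)$ by $\sigma^{(1)}(x)$, both products equal $S(x)\sigma^{(1)}(x)\sigma^{(2)}(x)$ modulo $x^{t-1}$, so that
$$\sigma^{(2)}(x)\tau^{(1)}(x)\equiv\sigma^{(1)}(x)\tau^{(2)}(x)\pmod{x^{t-1}}.$$

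The second step is to upgrade this congruence to an exact polynomial identity by a degree count. Under the hypothesis $d=t$ or ($d=t+1$ and $t$ odd) one has $\lfloor\frac{d-1}{2}\rfloor=\lfloor\frac{t-1}{2}\rfloor$, so each $\deg\sigma^{(i)}(x)\le\lfloor\frac{t-1}{2}\rfloor$ while $\deg\tau^{(i)}(x)<\deg\sigma^{(i)}(x)$. Consequently each of the two products has degree at most $2\lfloor\frac{t-1}{2}\rfloor-1<t-1$. Since two polynomials of degree strictly less than $t-1$ that are congruent modulo $x^{t-1}$ must be equal, I obtain the genuine identity $\sigma^{(2)}(x)\tau^{(1)}(x)=\sigma^{(1)}(x)\tau^{(2)}(x)$.

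The third step invokes the coprimality hypotheses to finish. Because $\gcd(\sigma^{(1)}(x),\tau^{(1)}(x))=1$, the divisibility $\sigma^{(1)}(x)\mid\sigma^{(2)}(x)\tau^{(1)}(x)$ forces $\sigma^{(1)}(x)\mid\sigma^{(2)}(x)$; by the symmetric argument $\sigma^{(2)}(x)\mid\sigma^{(1)}(x)$. Hence $\sigma^{(1)}(x)=\lambda\,\sigma^{(2)}(x)$ for some $\lambda\in\mathbb{F}_q^*$, and substituting back into the identity yields $\tau^{(1)}(x)=\lambda\,\tau^{(2)}(x)$. This is exactly the asserted uniqueness up to the leading coefficient of $\sigma(x)$.

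The one delicate point — the main obstacle, such as it is — is the degree estimate that converts the congruence into an equality, and this is precisely where the restriction on $d$ is used. Verifying $2\lfloor\frac{t-1}{2}\rfloor-1<t-1$ in both admissible cases is the crux; for $d=t+1$ with $t$ even one would instead have $\lfloor\frac{d-1}{2}\rfloor=\frac{t}{2}$, the product degree could reach $t-1$, and the modulus $x^{t-1}$ arising from the reduced submatrix $H_1$ in (\ref{submala}) would no longer separate the two pairs. That borderline situation must be handled with the full parity-check matrix $H$ (i.e.\ modulus $x^{t}$) and falls outside the present statement. Everything else is the classical uniqueness argument for the key equation.
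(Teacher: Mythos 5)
Your proposal is correct and follows essentially the same route as the paper's own proof: cross-multiplying to eliminate $S(x)$, using the degree bound $\deg\sigma^{(i)}+\deg\tau^{(i)}\le 2\lfloor\frac{t-1}{2}\rfloor-1<t-1$ to upgrade the congruence modulo $x^{t-1}$ to an equality, and then invoking coprimality to obtain the common scalar $\lambda$. Your explicit verification of the degree inequality and the remark on why the case $d=t+1$ with $t$ even falls outside this argument are slightly more detailed than the paper's, but the argument is the same.
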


Based on the above theorem, Equations (\ref{eq0}) and (\ref{eq1}) form the key equations of TGRS code $C$ for Case 1.

\textbf{Case 2}:
$d=t+1$ and $t$ is even.
Let the syndrome of $\boldsymbol{r}$ be
$$
\boldsymbol{s}=\begin{pmatrix}
    s_0 \\
s_1 \\
\vdots \\
s_{t-1}
\end{pmatrix}=H \boldsymbol{r}^T=H \boldsymbol{c}^T+H \boldsymbol{e}^T=H \boldsymbol{e}^T,
$$
where
$$
 s_i=\sum_{j \in J} e_jw_j  \alpha_j^i (0 \le i \le t-2), s_{t-1}=\sum_{j \in J} e_jw_j  \left(\alpha_j^{t-1} +  f(\alpha_j)\right) .
$$
Define the syndrome polynomial $S(x)$ of the received word $\boldsymbol{r}$ as
\begin{align}
\begin{aligned}\label{ss1}
S(x) & =\sum_{i=0}^{t-1} s_{t-i-1} x^i=\sum_{i=1}^{t-1} \sum_{j \in J} e_jw_j \alpha_j^{t-i-1} x^i+\sum_{j \in J} e_jw_j \left(\alpha_j^{t-1}+f(\alpha_j)\right) \\
& =\sum_{i=0}^{t-1} \sum_{j \in J} e_jw_j \alpha_j^{t-i-1} x^i+\sum_{j \in J} e_jw_j f(\alpha_j) \\
& =\sum_{j \in J} e_jw_j  \frac{x^t-\alpha_j^t}{x-\alpha_j}+\sum_{j \in J} e_jw_j f(\alpha_j) \\
& \equiv-\sum_{j \in J} w_j\frac{ e_j \alpha_j^t}{x-\alpha_j}+\sum_{j \in J} e_jw_j f(\alpha_j) \pmod{x^t}.
\end{aligned}
\end{align}
The error location polynomial is
$$
    \sigma(x)=\prod_{j \in J}\left(x-\alpha_j\right)
$$
and the error evaluator polynomial is
$$
\begin{aligned}
\tau(x) & =\left(-\sum_{i \in J} \frac{ e_iw_i \alpha_i^t}{x-\alpha_i}+\sum_{i \in J} e_i w_i f(\alpha_i)\right) \prod_{j \in J}\left(x-\alpha_j\right) \\
& =\sigma(x) \sum_{i \in J} e_iw_i f(\alpha_i)-\sum_{i \in J} e_iw_i \alpha_i^t \prod_{j \in J \backslash\{i\}}\left(x-\alpha_j\right) .
\end{aligned}
$$
Then
\begin{align}\label{sxx1}
    S(x) \sigma(x) \equiv \tau(x) \pmod{x^t}.
\end{align}
It is clear that
\begin{align}\label{sxx11}
    \gcd(\sigma(x), \tau(x))=1, \deg \tau(x) \le \deg \sigma(x)=|J| \le \frac{t}{2}.
\end{align}
By division with remainder,
\[
    \tau(x)=a \sigma(x)+\omega(x), a=\sum_{j \in J} e_jw_j f(\alpha_j) \in \mathbb{F}_q,\omega(x)=-\sum_{j\in J}e_jw_j\alpha_j^t\frac{\sigma(x)}{x-\alpha_j},
\]
where $\deg \omega(x)<\deg \sigma(x)$. For each $i \in J$,
\[
    \tau\left(\alpha_i\right)=-e_iw_i \alpha_i^t \prod_{j \in J \backslash\{i\}}\left(\alpha_i-\alpha_j\right)=-e_iw_i \alpha_i^t \sigma^{\prime}\left(\alpha_i\right), e_i=-\frac{ \tau\left(\alpha_i\right)}{w_i\alpha_i^t \sigma^{\prime}\left(\alpha_i\right)}.
\]
Here, $\sigma^{\prime}(x)$ is the formal derivative of $\sigma(x)$.

The relationship between $\tau(x)$ and $\sigma(x)$ is as follows:
\begin{align}\label{sxx2}
    \begin{aligned}
        J &=\{i\,|\,\sigma(\alpha_i)=0,1 \le i \le n\}, \deg \sigma(x)=|J|, \\
        e_i& = \begin{cases}
            -\frac{ \tau\left(\alpha_i\right)}{w_i\alpha_i^t \sigma^{\prime}\left(\alpha_i\right)},  \ \text{if} \ i \in J, \\
            0, \ \text{if} \ i \notin J.
        \end{cases}\\
        \tau(x)&=a\sigma(x)+\omega(x), a=\sum_{j \in J}  e_jw_j f(\alpha_j),\omega(x)=-\sum_{j\in J}e_jw_j\alpha_j^t\frac{\sigma(x)}{x-\alpha_j}, \deg \omega(x) < |J|.
    \end{aligned}
\end{align}

\begin{Theorem}\label{unithm3}
   Let $C$ be an MDS TGRS $[n, n-t, t+1]$ code with $t$ even.
Let $\boldsymbol{r}$ be a received word with $d\left(\boldsymbol{r}, C\right) \le \frac{t}{2}$, and let $S(x)$ be the syndrome polynomial of $\boldsymbol{r}$ as in (\ref{ss1}).
Then there is a unique polynomial pair $(\sigma(x), \tau(x))$ satisfying Equations (\ref{sxx1})-(\ref{sxx2}), up to the leading coefficient of $\sigma(x)$.
\begin{proof}
    We prove this theorem by two subcases.

    \textbf{Subcase 1}: $d(\boldsymbol{r}, C) < \frac{t}{2}$.
    Assume there exist two pairs $(\sigma^{(1)}(x), \tau^{(1)}(x))$ and $(\sigma^{(2)}(x), \tau^{(2)}(x))$ that satisfy
    Equations (\ref{sxx1})-(\ref{sxx2}). i.e.,
    \begin{align*}
        S(x) \sigma^{(1)}(x) \equiv \tau^{(1)}(x) \pmod{x^{t}}, S(x) \sigma^{(2)}(x) \equiv \tau^{(2)}(x) \pmod{x^{t}}.
    \end{align*}
    It is clear that $\sigma^{(1)}(x)\ne 0$ and $\sigma^{(2)}(x)\ne 0$. Hence
    \[
        \sigma^{(2)}(x)\tau^{(1)}(x)  \equiv \sigma^{(1)}(x)\tau^{(2)}(x) \pmod{x^{t}}.
    \]
    Since $\deg(\tau^{(1)}(x)) \le \deg(\sigma^{(1)}(x)) < \frac{t}{2}$ and $\deg(\tau^{(2)}(x)) \le \deg(\sigma^{(2)}(x)) < \frac{t}{2}$,
    we have
    \[
        \sigma^{(2)}(x)\tau^{(1)}(x)  = \sigma^{(1)}(x)\tau^{(2)}(x).
    \]
    Furthermore, since $\gcd(\sigma^{(1)}(x), \tau^{(1)}(x))=1$ and $\gcd(\sigma^{(2)}(x), \tau^{(2)}(x))=1$, we conclude that
    \[
        \sigma^{(1)}(x)=\lambda \sigma^{(2)}(x), \tau^{(1)}(x) =\lambda \tau^{(2)}(x), \lambda \in \mathbb{F}_q^*.
    \]
    Therefore, up to the leading coefficient of $\sigma(x)$, there is a unique pair $(\sigma(x), \tau(x))$.

    \textbf{Subcase 2}: $d(\boldsymbol{r}, C) = \frac{t}{2}$.
    Assume there exist two pairs $(\sigma^{(1)}(x), \tau^{(1)}(x))$ and $(\sigma^{(2)}(x), \tau^{(2)}(x))$ that satisfy
    Equations (\ref{sxx1})-(\ref{sxx2}). Without loss of generality, let $\sigma^{(1)}(x) =\prod_{j \in J_1}\left(x-\alpha_j\right)$ and $\sigma^{(2)}(x) =\prod_{j \in J_2}\left(x-\alpha_j\right)$.
    Then
    \begin{align*}
        \tau^{(1)}(x)&=a_1\sigma^{(1)}(x)+\omega_1(x), a_1=\sum_{j \in J_1}  e_jw_j f(\alpha_j),\omega_1(x)=\sum_{j\in J_1}e_jw_j\alpha_j^t\frac{\sigma^{(1)}(x)}{x-\alpha_j}, \deg \omega_1(x) < |J_1|, \\
        \tau^{(2)}(x)&=a_2\sigma^{(2)}(x)+\omega_2(x), a_2=\sum_{j \in J_2}  e'_jw_j f(\alpha_j),\omega_2(x)=\sum_{j\in J_2}e'_jw_j\alpha_j^t\frac{\sigma^{(2)}(x)}{x-\alpha_j}, \deg \omega_2(x) < |J_2|.
    \end{align*}
    Thus,
    \begin{align*}
        \tau^{(1)}(x) & =\left(-\sum_{j \in J_1} \frac{ e_jw_j \alpha_j^t}{x-\alpha_j}+\sum_{j \in J_1} e_j w_j f(\alpha_j)\right) \sigma^{(1)}(x), \\
        \tau^{(2)}(x) & =\left(-\sum_{j \in J_2} \frac{ e'_jw_j \alpha_j^t}{x-\alpha_j}+\sum_{j \in J_2} e'_j w_j f(\alpha_j)\right) \sigma^{(2)}(x).
    \end{align*}
    By the conditions
    \[
        S(x) \sigma^{(1)}(x) \equiv \tau^{(1)}(x) \pmod{x^t},
        S(x) \sigma^{(2)}(x) \equiv \tau^{(2)}(x) \pmod{x^t},
    \]
    we have
    \[
        S(x) \equiv \left(-\sum_{j \in J_1} \frac{ e_jw_j \alpha_j^t}{x-\alpha_j}+\sum_{j \in J_1} e_j w_j f(\alpha_j)\right) \equiv \left(-\sum_{j \in J_2} \frac{ e'_jw_j \alpha_j^t}{x-\alpha_j}+\sum_{j \in J_2} e'_j w_j f(\alpha_j)\right) \pmod{x^t}.
    \]
    Then
    \begin{align} \label{proofcor}
        \begin{aligned}
            S(x) &\equiv \sum_{i=0}^{t-1} \sum_{j \in J_1} e_jw_j \alpha_j^{t-i-1} x^i+\sum_{j \in J_1} e_jw_j f(\alpha_j) \pmod{x^t} \\
            &\equiv \sum_{i=0}^{t-1} \sum_{j \in J_2} e'_jw_j \alpha_j^{t-i-1} x^i+\sum_{j \in J_2} e'_jw_j f(\alpha_j) \pmod{x^t}.
        \end{aligned}
    \end{align}

   Since $C$ is an $[n, n-t, t+1]$ MDS code and $|J_1|=|J_2|=\frac{t}{2}$,
    Equation (\ref{proofcor}) has a unique solution.
    Thus $J_1 = J_2$ and $e_j = e_j'$ for any $j \in J_1$.
    Up to the leading coefficient of $\sigma(x)$, there is a unique pair $(\sigma(x), \tau(x))$.
\end{proof}
\end{Theorem}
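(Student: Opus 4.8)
The plan is to assume two pairs $(\sigma^{(1)}(x),\tau^{(1)}(x))$ and $(\sigma^{(2)}(x),\tau^{(2)}(x))$ both satisfy the key equations (\ref{sxx1})--(\ref{sxx2}), and to show they coincide up to a scalar multiple of $\sigma$. The opening move mirrors Theorem~\ref{unithm1}: multiplying $S(x)\sigma^{(1)}(x)\equiv\tau^{(1)}(x)$ by $\sigma^{(2)}(x)$ and $S(x)\sigma^{(2)}(x)\equiv\tau^{(2)}(x)$ by $\sigma^{(1)}(x)$ gives $\sigma^{(2)}(x)\tau^{(1)}(x)\equiv\sigma^{(1)}(x)\tau^{(2)}(x)\pmod{x^t}$. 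I would then split on the weight of the error. When $d(\boldsymbol{r},C)<\tfrac{t}{2}$, both candidate locators satisfy $\deg\sigma^{(i)}<\tfrac{t}{2}$, so each product has degree strictly less than $t$; divisibility by $x^t$ then promotes the congruence to an honest equality, and the coprimality hypotheses $\gcd(\sigma^{(i)},\tau^{(i)})=1$ force $\sigma^{(1)}=\lambda\sigma^{(2)}$ and $\tau^{(1)}=\lambda\tau^{(2)}$ for some $\lambda\in\mathbb{F}_q^*$.

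The hard part will be the boundary subcase $d(\boldsymbol{r},C)=\tfrac{t}{2}$, where $\deg\sigma^{(i)}=\tfrac{t}{2}$ and, crucially, $\tau^{(i)}$ may itself have degree $\tfrac{t}{2}$, since Case~2 only provides the non-strict bound $\deg\tau\le\deg\sigma$ rather than the strict bound available in Case~1. Then $\sigma^{(2)}(x)\tau^{(1)}(x)-\sigma^{(1)}(x)\tau^{(2)}(x)$ has degree at most $t$, so its divisibility by $x^t$ says only that it equals a constant multiple of $x^t$; a nonzero $cx^t$ is a priori possible, and the pure degree argument collapses.

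To get past this, I would abandon the formal congruence and exploit the MDS structure directly. Writing $\sigma^{(1)}(x)=\prod_{j\in J_1}(x-\alpha_j)$ and $\sigma^{(2)}(x)=\prod_{j\in J_2}(x-\alpha_j)$, the structural conditions (\ref{sxx2}) let me read off from each pair a genuine error vector: the partial-fraction form of $\tau^{(i)}$ shows that $S(x)$, reduced modulo $x^t$, is simultaneously the syndrome polynomial of an error $\boldsymbol{e}$ supported on $J_1$ and of an error $\boldsymbol{e}'$ supported on $J_2$, which is precisely the content of (\ref{proofcor}). Equating the $t$ coefficients $s_0,\dots,s_{t-1}$ of $S(x)$ means exactly that $H\boldsymbol{e}^T=H(\boldsymbol{e}')^T$ for the full $t\times n$ parity-check matrix $H$.

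The finishing step is the classical MDS uniqueness bound: $H(\boldsymbol{e}-\boldsymbol{e}')^T=\boldsymbol{0}$ places $\boldsymbol{e}-\boldsymbol{e}'$ in $C$, while its Hamming weight is at most $|J_1|+|J_2|=t<t+1=d$, forcing $\boldsymbol{e}-\boldsymbol{e}'=\boldsymbol{0}$. Hence $J_1=J_2$ and $e_j=e_j'$ for every $j$, which yields $\sigma^{(1)}=\sigma^{(2)}$ and $\tau^{(1)}=\tau^{(2)}$ up to the leading coefficient of $\sigma$, completing the proof. I expect the single delicate point to be making explicit that conditions (\ref{sxx2}) are exactly what certify each pair $(\sigma^{(i)},\tau^{(i)})$ as the syndrome data of a weight-$\tfrac{t}{2}$ error; without that certification, the minimum-distance bound cannot be invoked.
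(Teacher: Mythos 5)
Your proposal is correct and follows essentially the same route as the paper: the same cross-multiplication and degree argument for the subcase $d(\boldsymbol{r},C)<\tfrac{t}{2}$, and the same reduction of the boundary subcase to the statement that two weight-$\tfrac{t}{2}$ error vectors with equal syndromes must coincide in an MDS code of minimum distance $t+1$. You merely spell out the final minimum-distance step ($H(\boldsymbol{e}-\boldsymbol{e}')^T=\boldsymbol{0}$ with $\mathrm{wt}(\boldsymbol{e}-\boldsymbol{e}')\le t<d$) that the paper compresses into the assertion that Equation (\ref{proofcor}) has a unique solution.
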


Equations (\ref{sxx1})-(\ref{sxx2}) form the key equations of TGRS code $C$ for Case 2.

By Remark~\ref{equrem}, the results in this section are applicable to the codes $\TGRS_{n-t,n-t}(\boldsymbol{\alpha}, \boldsymbol{v}, l, \eta_1, \lambda_1)$ and $\TGRS_{n-t,-1}(\boldsymbol{\alpha}, \boldsymbol{v}, l, \eta_2, \lambda_2)$.
To eliminate confusion, we will only discuss the decoding of $\TGRS_{n-t,n-t}(\boldsymbol{\alpha}, \boldsymbol{v}, l, \eta_1, \lambda_1)$.

\section{Decoding for TGRS codes}

The Berlekamp-Massey Algorithm \cite{Berlekamp2} has achieved many successful applications in engineering.
In \cite{Sugiyama}, Sugiyama was the first researcher to successfully utilize the Euclid's Algorithm for decoding GRS and Goppa codes,
\cite{Sui3} and \cite{Sun1} also considered the decoding of TGRS codes using similar methods.

In Section 4, we have explored the decoding problem of a class of TGRS codes and attributed the uniqueness of decoding to the uniqueness of the error location polynomial $\sigma(x)$ and the error evaluator polynomial $\tau(x)$ under certain conditions.

In this section, we shall use the extended Euclid's Algorithm to construct all possible polynomial pairs $(\sigma(x), \tau(x))$ to ensure that they satisfy the conditions stated in Theorems \ref{unithm1} and \ref{unithm3}, respectively.

\subsection{Extended Euclid's Algorithms}
The extended Euclid's Algorithm, tailored for polynomials over the finite field $\mathbb{F}_q$, serves as a potent method for solving key equations by facilitating the computation of the greatest common divisor (GCD) of two polynomials, $g(x)$ and $S(x)$, with $g(x) \neq 0$ and $\deg g(x) > \deg S(x)$.

This algorithm iteratively computes:
remainders, denoted by $\tau_i(x)$, quotients, denoted by $q_i(x)$,
auxiliary polynomial, $\sigma_i(x)$.
The initial setup for these polynomials is established as:
\[
\begin{aligned}
& \sigma_{-1}(x) = 0, \quad \tau_{-1}(x) = g(x), \\
& \sigma_0(x) = 1, \quad \tau_0(x) = S(x).
\end{aligned}
\]
Subsequently, for each step $i$, the quotient $q_i(x)$ and the next remainder $\tau_i(x)$ are determined by the division of $\tau_{i-2}(x)$ by $\tau_{i-1}(x)$:
\[
\tau_{i-2}(x) = q_i(x) \tau_{i-1}(x) + \tau_i(x), \quad \text{where} \quad \deg \tau_i(x) < \deg \tau_{i-1}(x).
\]
Concurrently, the auxiliary polynomial $\sigma_i(x)$ is updated using the following relations:
\[
\sigma_i(x) = \sigma_{i-2}(x) - q_i(x) \sigma_{i-1}(x).
\]

Let $v$ represent the largest index for which $\tau_v(x) \neq 0$. It is a well-established fact that:
\[
    \tau_v(x) = \gcd(S(x), g(x)).
\]

In other words, the non-zero remainder with the smallest degree, obtained through the iterative process of the extended Euclid's Algorithm, is the greatest common divisor of the polynomials $S(x)$ and $g(x)$.

The following theorem represents the main result required by the Sugiyama Algorithm \cite{Sugiyama}.
Additionally, the conclusion presented can be directly utilized in the context of Case 1 of TGRS code decoding, as discussed in Section 4.

\begin{Theorem}\cite{Sugiyama}
Let $\sigma _{i}( x)$ and $\tau_i(x)$ for $i\in \{ - 1, 0, . . . , v + 1\}$ be polynomials from the Euclid's Algorithm applied to $g(x)$ and $S(x)$.
Suppose that $\sigma (x)$ and $\tau (x)$ are nonzero polynomials over $\mathbb{F}_{q}$ satisfying the following conditions:

\noindent (1) $\gcd (\sigma(x), \tau(x)) = 1$,

\noindent (2) $\deg\sigma(x)+\deg\tau(x)<\deg g(x)$,

\noindent (3) $\sigma (x) S(x) \equiv \tau (x) \pmod{g(x)}$.

\noindent Then there is a  unique index $h\in \{ 0, 1, . . . , v + 1\}$ and a constant $\lambda \in \mathbb{F} _q$ such that
$$\sigma(x)=\lambda\sigma_h(x),\tau(x)=\lambda\tau_h(x).$$
Moreover, if $\deg\sigma(x)\le \frac{1}{2}\deg g(x)$,
and $\deg\tau(x) < \frac{1}{2} \deg g(x)$,
then the value $h$ is the unique index for which the remainders in the Euclid's Algorithm satisfy
$\deg \tau_h < \frac{1}{2} \deg g \le \deg \tau_{h-1}$.
\end{Theorem}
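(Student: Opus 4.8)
The plan is to deduce everything from two invariants of the Euclidean recurrence, both proved by a short induction on $i$. First I would show that
\[
\sigma_i(x) S(x) \equiv \tau_i(x) \pmod{g(x)}, \qquad i \ge -1 .
\]
The base cases $i=-1,0$ hold by the initialization $(\sigma_{-1},\tau_{-1})=(0,g)$ and $(\sigma_0,\tau_0)=(1,S)$, and the inductive step follows by substituting $\tau_i=\tau_{i-2}-q_i\tau_{i-1}$ and $\sigma_i=\sigma_{i-2}-q_i\sigma_{i-1}$ into the congruence. Second I would establish the degree identity
\[
\deg\sigma_i + \deg\tau_{i-1} = \deg g, \qquad i \ge 0 ,
\]
which is immediate for $i=0$ (where $\deg\sigma_0=0$) and for $i=1$ (where $\deg\sigma_1=\deg q_1=\deg g-\deg S$). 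For $i\ge2$ one uses that the remainder degrees strictly decrease, so each quotient satisfies $\deg q_i=\deg\tau_{i-2}-\deg\tau_{i-1}>0$; this forces $\deg\sigma_i=\deg q_i+\deg\sigma_{i-1}$ (the term $q_i\sigma_{i-1}$ dominates $\sigma_{i-2}$), and the identity propagates by induction.

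With these invariants in hand, I would pick $h$ to be the unique index with $\deg\tau_h\le\deg\tau<\deg\tau_{h-1}$, which exists because the $\deg\tau_i$ strictly decrease from $\deg\tau_{-1}=\deg g$ while $\deg\tau<\deg g$ by hypothesis (2). To be sure that $\tau_h\ne0$, i.e.\ that $h\le v$, I note that $\delta:=\gcd(S,g)=\tau_v$ divides $\tau$: indeed $\tau\equiv\sigma S\pmod{g}$ with $\delta\mid S$ and $\delta\mid g$, so $\delta\mid\tau$, whence $\deg\tau\ge\deg\tau_v$ and therefore $h\le v$.

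The core step combines hypothesis (3) with the invariant congruence at index $h$. Multiplying, I get $\sigma_h\tau\equiv\sigma_h\sigma S\equiv\sigma\tau_h\pmod{g}$. I then bound both sides: $\deg(\sigma_h\tau)=\deg\sigma_h+\deg\tau=\deg g-\deg\tau_{h-1}+\deg\tau<\deg g$ by the degree identity and the choice of $h$, while $\deg(\sigma\tau_h)\le\deg\sigma+\deg\tau<\deg g$ by hypothesis (2) together with $\deg\tau_h\le\deg\tau$. Since a polynomial of degree $<\deg g$ that is $\equiv0\pmod{g}$ must vanish, I obtain the exact identity $\sigma_h\tau=\sigma\tau_h$. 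Now $\gcd(\sigma,\tau)=1$ forces $\sigma\mid\sigma_h$; writing $\sigma_h=\mu\sigma$ gives $\tau_h=\mu\tau$, and $\deg\mu=\deg\tau_h-\deg\tau\le0$ together with $\mu\ne0$ forces $\mu=\lambda\in\mathbb{F}_q^*$. Hence $\sigma=\lambda^{-1}\sigma_h$ and $\tau=\lambda^{-1}\tau_h$; uniqueness of $h$ follows because $\deg\tau=\deg\tau_h$ and the remainder degrees are pairwise distinct.

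For the final assertion, the extra hypotheses $\deg\sigma\le\tfrac12\deg g$ and $\deg\tau<\tfrac12\deg g$ translate, via $\deg\sigma=\deg\sigma_h$ and $\deg\tau=\deg\tau_h$, into $\deg\tau_h<\tfrac12\deg g$ and $\deg\tau_{h-1}=\deg g-\deg\sigma_h\ge\tfrac12\deg g$, so $h$ is exactly the crossing index satisfying $\deg\tau_h<\tfrac12\deg g\le\deg\tau_{h-1}$; strict monotonicity of the $\deg\tau_i$ makes this index unique. I expect the main obstacle to be the bookkeeping in the degree identity of the first paragraph and in the two degree bounds used to upgrade the congruence $\sigma_h\tau\equiv\sigma\tau_h$ to an equality: these are precisely the places where hypothesis (2) and the identity $\deg\sigma_i+\deg\tau_{i-1}=\deg g$ must interlock correctly, and a sign or off-by-one slip there would break the argument.
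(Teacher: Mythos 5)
The paper does not prove this statement: it is quoted verbatim as a known result with the citation to Sugiyama et al., so there is no in-paper argument to compare against. Your proof is correct and is the standard one from the literature — the congruence invariant $\sigma_i S \equiv \tau_i \pmod{g}$, the degree identity $\deg\sigma_i + \deg\tau_{i-1} = \deg g$, the choice of $h$ by $\deg\tau_h \le \deg\tau < \deg\tau_{h-1}$ (with $\tau_v = \gcd(S,g) \mid \tau$ guaranteeing $h \le v$), and the degree comparison that upgrades $\sigma_h\tau \equiv \sigma\tau_h \pmod{g}$ to an equality, after which $\gcd(\sigma,\tau)=1$ forces proportionality by a constant. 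I see no gaps; the only points worth making fully explicit in a written version are the inductive fact $\deg\sigma_{i-1}\ge\deg\sigma_{i-2}$ used to justify $\deg\sigma_i=\deg q_i+\deg\sigma_{i-1}$, and the observation that hypotheses (2)–(3) already rule out $S=0$.
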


\begin{Theorem} \label{abc} \cite{Sun1}
    Let $g(x)$ and $S(x)$ be two polynomials with $\deg S(x)<\deg g(x)=t$, where $t$ is even.
Let $\sigma_i(x)$ and $\tau_i(x)$ for $i \in\{-1,0, \ldots, v+1\}$ be the polynomials from the Euclid's Algorithm applied to $g(x)$ and $S(x)$.
Suppose that there is a polynomial pair $(\sigma(x), \tau(x))$ over $\mathbb{F}_q$ that satisfies the following conditions:

\noindent (1) $\gcd(\sigma(x), \tau(x))=1$,

\noindent (2) $\deg \tau(x) \le \deg \sigma(x)=\frac{t}{2}$,

\noindent (3) $\sigma(x) S(x) \equiv \tau(x) \pmod{g(x)}$.

\noindent Then there are $\lambda_1 \in \mathbb{F}_q$ and $\lambda_2 \in \mathbb{F}_q^*$ such that
$$
    \sigma(x)=\lambda_1 \sigma_{h-1}(x)+\lambda_2 \sigma_h(x), \tau(x)=\lambda_1 \tau_{h-1}(x)+\lambda_2 \tau_h(x),
$$
where $\tau_h(x)$ is the polynomial which has the minimum index $h \in\{0,1, \cdots, v+1\}$ and satisfies $\deg \tau_h(x)<\frac{t}{2}$. Moreover, if $\deg \tau(x)<\deg \sigma(x)=\frac{t}{2}$ in (2), then $\lambda_1=0$.
\end{Theorem}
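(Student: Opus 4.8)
The plan is to exploit the algebraic structure of the whole Euclidean sequence rather than to argue about individual remainders. Recall the three standard facts produced by the algorithm applied to $g(x)$ and $S(x)$: first, $\sigma_i(x)S(x)\equiv\tau_i(x)\pmod{g(x)}$ for every $i$, which is immediate by induction from the recurrences and the initialization; second, the Bezout--determinant identity $\sigma_i(x)\tau_{i-1}(x)-\sigma_{i-1}(x)\tau_i(x)=(-1)^i g(x)$, which holds because each step multiplies the pair $\big((\sigma_{i-1},\tau_{i-1}),(\sigma_i,\tau_i)\big)$ by a matrix of determinant $-1$; and third, since $\deg\sigma_i$ strictly increases and $\deg\tau_i$ strictly decreases, comparing degrees in the determinant identity (there is no cancellation) gives $\deg\sigma_i+\deg\tau_{i-1}=\deg g=t$. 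The whole proof will reduce the statement to showing that the two coefficients expressing $(\sigma,\tau)$ in the ``basis'' $\{(\sigma_{h-1},\tau_{h-1}),(\sigma_h,\tau_h)\}$ are in fact scalars.

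First I would record the consequences of the definition of $h$. Since $h$ is minimal with $\deg\tau_h<\tfrac t2$, we have $\deg\tau_{h-1}\ge\tfrac t2$, and the degree identity then yields $\deg\sigma_h=t-\deg\tau_{h-1}\le\tfrac t2$ and $\deg\sigma_{h-1}=t-\deg\tau_{h-2}<\tfrac t2$, using $\deg\tau_{h-2}>\deg\tau_{h-1}\ge\tfrac t2$. Next I would introduce the solution module $\mathcal L=\{(\sigma,\tau):\sigma S\equiv\tau\pmod g\}$; writing $\sigma S-\tau=\rho g$ shows $\mathcal L=\mathbb F_q[x]\,(1,S)+\mathbb F_q[x]\,(0,g)$, and since the Euclidean steps are unimodular over $\mathbb F_q[x]$, the pair $\{(\sigma_{h-1},\tau_{h-1}),(\sigma_h,\tau_h)\}$ is again an $\mathbb F_q[x]$-basis of $\mathcal L$. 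Because condition $(3)$ places $(\sigma,\tau)$ in $\mathcal L$, there are unique $A(x),B(x)\in\mathbb F_q[x]$ with $\sigma=A\sigma_{h-1}+B\sigma_h$ and $\tau=A\tau_{h-1}+B\tau_h$; the goal becomes $A\in\mathbb F_q$ and $B\in\mathbb F_q^{*}$, giving $\lambda_1=A$ and $\lambda_2=B$.

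To pin down $A$ I would use Cramer's rule against the determinant identity, $A=(\sigma\tau_h-\sigma_h\tau)/\big((-1)^{h+1}g\big)$. The numerator is divisible by $g$, since $\sigma\tau_h-\sigma_h\tau\equiv\sigma\sigma_h S-\sigma_h\sigma S=0\pmod g$, and its degree is at most $t$ because $\deg(\sigma\tau_h)\le\tfrac t2+\deg\tau_h<t$ and $\deg(\sigma_h\tau)\le\tfrac t2+\tfrac t2=t$; hence $\deg A\le0$, so $A$ is constant. Substituting back into $\sigma=A\sigma_{h-1}+B\sigma_h$ and using $\deg(A\sigma_{h-1})<\tfrac t2=\deg\sigma$ forces $\deg(B\sigma_h)=\tfrac t2$; together with $\deg\sigma_h\le\tfrac t2$ and $\deg B\ge0$ this gives $\deg\sigma_h=\tfrac t2$ and $\deg B=0$, i.e. $B\in\mathbb F_q^{*}$. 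Note this simultaneously forces $\deg\tau_{h-1}=t-\deg\sigma_h=\tfrac t2$, which is the delicate point: the classical Sugiyama degree bound $\deg\sigma+\deg\tau<\deg g$ fails exactly in the extremal case $\deg\tau=\deg\sigma=\tfrac t2$, so one cannot simply invoke the preceding theorem, and a priori $\deg\tau_{h-1}$ could strictly exceed $\tfrac t2$; the existence of a degree-$\tfrac t2$ solution is precisely what rules this out. I expect this borderline degree bookkeeping to be the main obstacle, while every other step is routine.

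Finally, for the ``Moreover'' clause assume $\deg\tau<\deg\sigma=\tfrac t2$. Using the now-established facts $\deg\tau_{h-1}=\tfrac t2$, $\deg\tau_h<\tfrac t2$ and $B\in\mathbb F_q^{*}$ in the relation $\tau=A\tau_{h-1}+B\tau_h$: if the constant $A$ were nonzero then $\deg(A\tau_{h-1})=\tfrac t2$ while $\deg(B\tau_h)<\tfrac t2$, so no cancellation occurs at degree $\tfrac t2$ and $\deg\tau=\tfrac t2$, contradicting the hypothesis. Hence $A=\lambda_1=0$, which completes the proof.
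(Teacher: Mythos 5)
First, note that the paper does not prove this statement at all: it is quoted from \cite{Sun1} as a known result, so there is no in-paper proof to compare against. Your argument therefore has to stand on its own, and it contains one genuine gap.

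Your framework is sound: the solution set $\mathcal L$ is a free $\mathbb F_q[x]$-module with basis $\{(\sigma_{h-1},\tau_{h-1}),(\sigma_h,\tau_h)\}$, Cramer's rule against the determinant identity gives $A=(\sigma\tau_h-\sigma_h\tau)/((-1)^{h+1}g)$, and your degree count correctly shows $A\in\mathbb F_q$. The gap is in the very next step. From $\deg(B\sigma_h)=\tfrac t2$, $\deg\sigma_h\le\tfrac t2$ and $\deg B\ge 0$ you conclude ``$\deg\sigma_h=\tfrac t2$ and $\deg B=0$,'' but this is a non sequitur: the three constraints are equally consistent with, say, $\deg B=1$ and $\deg\sigma_h=\tfrac t2-1$ (which occurs exactly when $\deg\tau_{h-1}>\tfrac t2$, a possibility you yourself identify as the delicate point and then do not actually exclude). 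A symptom of the gap is that your proof of the main conclusion never invokes hypothesis (1), $\gcd(\sigma,\tau)=1$, which is indispensable: without it, $(\sigma,\tau)=(B\sigma_h,B\tau_h)$ with $B$ of positive degree can satisfy (2) and (3) while violating the conclusion.

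The step can be repaired by splitting on $A$. If $A\ne 0$, look at the $\tau$-relation: $\deg(A\tau_{h-1})=\deg\tau_{h-1}$, while $\deg(B\tau_h)=\deg B+\deg\tau_h=(\deg\tau_{h-1}-\tfrac t2)+\deg\tau_h<\deg\tau_{h-1}$, so $\deg\tau=\deg\tau_{h-1}$; hypothesis (2) forces $\deg\tau_{h-1}\le\tfrac t2$, hence $\deg\tau_{h-1}=\tfrac t2$, $\deg\sigma_h=\tfrac t2$ and $\deg B=0$. If $A=0$, then $B$ divides both $\sigma$ and $\tau$, so $\gcd(\sigma,\tau)=1$ forces $B\in\mathbb F_q^{*}$, and then $\deg\sigma_h=\deg\sigma=\tfrac t2$. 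With this repair, $\deg\tau_{h-1}=\tfrac t2$ holds in both cases and your argument for the ``Moreover'' clause goes through as written.
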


Theorem~\ref{unithm3} implies that a unique polynomial pair $(\sigma(x), \tau(x))$ satisfying Equations (\ref{sxx1})-(\ref{sxx2}) exists.
Theorem~\ref{abc} provides the specific form of this polynomial pair $(\sigma(x), \tau(x))$ that satisfies Equations (\ref{sxx1})-(\ref{sxx11}).
In the following, we present a more detailed result regarding the polynomial pair $(\sigma(x), \tau(x))$,
which will help us optimize the performance of the decoding algorithm for TGRS codes.

\begin{Theorem}\label{newthm}
    Under the conditions of Theorem \ref{abc}. Then there are $\lambda_1 \in \mathbb{F}_q$ and $\lambda_2 \in \mathbb{F}_q^*$ such that
$$
    \sigma(x)=\lambda_2 (\lambda_1 \sigma_{h-1}(x)+\sigma_h(x)), \tau(x)=\lambda_2 (\lambda_1 \tau_{h-1}(x)+ \tau_h(x)),
$$
where $\tau_h(x)$ is the polynomial which has the minimum index $h \in\{0,1, \cdots, v+1\}$ and satisfies $\deg \tau_h(x)<\frac{t}{2}$.
Moreover, $\lambda_1$ is one of the most frequent elements in the set $\mathcal{B}$,
where
\begin{align}\label{betadef}
    \mathcal{B}=\{\beta_i | i \le i \le n \}\backslash \infty \ \text{and} \ \beta_i = \begin{cases}
        \sigma_{h-1}(\alpha_i)^{-1} \sigma_{h}(\alpha_i), \ & \text{if} \ \sigma_{h-1}(\alpha_i) \ne 0, \\
        \infty,  & \text{if} \ \sigma_{h-1}(\alpha_i) = 0,
    \end{cases}
\end{align}
for $1 \le i \le n$.

\begin{proof}
By Theorems~\ref{abc} and \ref{unithm3},
if polynomial pair $(\sigma(x), \tau(x))$ satisfies Conditions (1)-(3),
there is unique  $\lambda_1 \in \mathbb{F}_q$ and $\lambda_2 \in \mathbb{F}_q^*$ such that
$$\sigma(x)=\lambda_2 (\lambda_1 \sigma_{h-1}(x)+\sigma_h(x)), \tau(x)=\lambda_2 (\lambda_1 \tau_{h-1}(x)+ \tau_h(x)).$$
We define
\[
    \sigma_{\lambda}(x)=\lambda \sigma_{h-1}(x)+\sigma_h(x).
\]
For fixed $i \in \{1,...,n\}$:

If $\sigma_{h-1}(\alpha_i) = 0$ and $\sigma_{h}(\alpha_i) = 0$, then for any $\lambda \in \mathbb{F}_q$, $\sigma_{\lambda}(\alpha_i) = 0$.

If $\sigma_{h-1}(\alpha_i) = 0$ and $\sigma_{h}(\alpha_i) \ne 0$, then for any $\lambda \in \mathbb{F}_q$, $\sigma_{\lambda}(\alpha_i) \ne 0$.

\noindent Let
\[
    N_0 =|\{i| 1 \le i \le n, \sigma_{n-1}(\alpha_i)=0, \sigma_n(\alpha_i)=0\}|,
\]
and let
\[
    N(\beta) = |\{i | \beta_i = \beta, 1 \le i \le n\}|,\ \text{where $\beta_i$ is defined as (\ref{betadef})}.
\]

Then, the polynomial $\sigma_{\lambda}(x)$ has $N_0 + N(\lambda)$ roots (without counting multiplicities) in the set $\{\alpha_i | 1 \le i \le n\}$.
When $\lambda$ takes the value of a most frequently occurring element in $\mathcal{B}$,
the polynomial $\sigma_{\lambda}(x)$ has the largest number of roots in the set $\{\alpha_i | 1 \le i \le n\}$.
Since $\sigma_{\lambda_1}(x)$ has $\deg(\sigma_{\lambda_1}(x))=\frac{t}{2}$ roots in $\{\alpha_i | 1 \le i \le n\}$, $\deg\sigma_{\lambda}(x) \le \frac{t}{2}$ for any $\lambda \in \mathbb{F}_q$,
and $\lambda_1$ is a most frequently occurring element in the set $\mathcal{B}$.
\end{proof}
\end{Theorem}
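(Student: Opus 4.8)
The plan is to separate the claim into its two ingredients: the explicit shape of the pair $(\sigma(x),\tau(x))$, which is only a reparametrization of Theorem~\ref{abc}, and the combinatorial identification of $\lambda_1$ as a most frequent element of $\mathcal{B}$, which is the genuinely new content. For the first ingredient I would apply Theorem~\ref{abc} to write $\sigma(x)=\mu_1\sigma_{h-1}(x)+\mu_2\sigma_h(x)$ and $\tau(x)=\mu_1\tau_{h-1}(x)+\mu_2\tau_h(x)$ with $\mu_2\in\mathbb{F}_q^*$ and $\mu_1\in\mathbb{F}_q$, and then factor out $\mu_2$, setting $\lambda_2=\mu_2$ and $\lambda_1=\mu_1/\mu_2$, to reach the asserted form. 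Uniqueness of the pair $(\sigma,\tau)$, and hence of $\lambda_1$, is guaranteed by Theorem~\ref{unithm3}, so the remaining task is purely to locate $\lambda_1$.

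To that end I would introduce $\sigma_{\lambda}(x)=\lambda\sigma_{h-1}(x)+\sigma_h(x)$, so that $\sigma(x)=\lambda_2\sigma_{\lambda_1}(x)$, and first bring the degrees under control. Since the degrees $\deg\sigma_i$ are strictly increasing in $i$ along the extended Euclidean algorithm, one has $\deg\sigma_{h-1}<\deg\sigma_h$; and because $\deg\sigma=\tfrac{t}{2}$ while $\lambda_2\ne0$ and $\deg\sigma_{h-1}<\deg\sigma_h$, the representation of $\sigma$ forces $\deg\sigma_h=\tfrac{t}{2}$. The lower-degree term $\lambda\sigma_{h-1}$ then cannot cancel the leading coefficient of $\sigma_h$, so $\deg\sigma_{\lambda}=\tfrac{t}{2}$ for every $\lambda\in\mathbb{F}_q$. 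In particular each $\sigma_{\lambda}$ has at most $\tfrac{t}{2}$ roots among the distinct points $\alpha_1,\dots,\alpha_n$. Moreover, in the decoding context the genuine $\sigma(x)$ is a nonzero scalar multiple of $\prod_{j\in J}(x-\alpha_j)$ with $|J|=\tfrac{t}{2}$, so $\sigma_{\lambda_1}=\lambda_2^{-1}\sigma$ attains this maximum, all of its roots lying in $\{\alpha_i\}$.

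The heart of the argument is then a counting step carried out pointwise in $i$. For each fixed $i$ I would split into cases according to $\sigma_{h-1}(\alpha_i)$: if $\sigma_{h-1}(\alpha_i)=\sigma_h(\alpha_i)=0$ then $\alpha_i$ is a root of $\sigma_{\lambda}$ for every $\lambda$ (these indices contribute a constant $N_0$); if $\sigma_{h-1}(\alpha_i)=0$ but $\sigma_h(\alpha_i)\ne0$ then $\alpha_i$ is a root for no $\lambda$; and if $\sigma_{h-1}(\alpha_i)\ne0$ then $\alpha_i$ is a root for exactly one value of $\lambda$, the critical value recorded by $\beta_i$ in (\ref{betadef}). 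Writing $N(\lambda)=|\{i:\beta_i=\lambda\}|$, the number of roots of $\sigma_{\lambda}$ in $\{\alpha_i\}$ equals $N_0+N(\lambda)$, which by the degree bound never exceeds $\tfrac{t}{2}$. Since $\sigma_{\lambda_1}$ realizes exactly $\tfrac{t}{2}$ such roots, $N(\lambda_1)$ must be maximal; that is, $\lambda_1$ occurs in the multiset $\{\beta_i\}$ with maximal multiplicity and is therefore a most frequent element of $\mathcal{B}$.

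I expect the main obstacle to be precisely the degree bookkeeping of the second step: it is the uniform identity $\deg\sigma_{\lambda}=\tfrac{t}{2}$ that prevents any competing $\lambda$ from producing more roots than the true one, and so converts ``$\sigma_{\lambda_1}$ has the maximal number of roots'' into ``$N(\lambda_1)$ is maximal.'' Once the degrees are pinned down, the trichotomy and the final count are routine. One point to treat carefully is the sign convention in (\ref{betadef}): the value of $\lambda$ making $\alpha_i$ a root of $\sigma_{\lambda}$ is $-\sigma_h(\alpha_i)/\sigma_{h-1}(\alpha_i)$, so the matching of $\lambda_1$ against $\{\beta_i\}$ should be read consistently with how $\beta_i$ is defined there.
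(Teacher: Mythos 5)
Your proposal is correct and follows essentially the same route as the paper: reparametrize Theorem~\ref{abc} by factoring out the nonzero coefficient of $\sigma_h$, introduce $\sigma_{\lambda}(x)=\lambda\sigma_{h-1}(x)+\sigma_h(x)$, count its roots among the $\alpha_i$ pointwise via the trichotomy on $\sigma_{h-1}(\alpha_i)$ as $N_0+N(\lambda)$, and conclude from the degree bound together with the fact that $\sigma_{\lambda_1}$ attains $\frac{t}{2}$ roots that $N(\lambda_1)$ is maximal. Your two refinements---justifying $\deg\sigma_{\lambda}=\frac{t}{2}$ uniformly via the strictly increasing degrees of the $\sigma_i$ in the Euclidean algorithm, and flagging that the root condition actually gives $\lambda=-\sigma_h(\alpha_i)/\sigma_{h-1}(\alpha_i)$ whereas (\ref{betadef}) omits the minus sign (harmless only in characteristic $2$, as in the paper's example)---address points the paper leaves implicit.
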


\subsection{Decoding algorithms for TGRS codes}
In this section, we will give decoding algorithms for TGRS codes $C_1$ and $C_2$
based on the extended Euclid's Algorithm.

\begin{Theorem}
    Let $C$ be a TGRS $[n, n-t, d]$ code as given in Definition~\ref{deftgrs1}, where $d=t$ or ($d=t+1$ and $t$ is odd).
Let $\boldsymbol{r}$ be a received word with $d(\boldsymbol{r}, C)\le \lfloor \frac{d-1}{2} \rfloor$,
$S(x)$ the syndrome polynomial of $\boldsymbol{r}$ as given in Equation (\ref{ss0}), and $g(x)= x^{t-1}.$
Let $\sigma _{i}( x)$ and $\tau _{i}( x)$ for $i \in \{ - 1, 0, \ldots , v + 1\}$ be the polynomials from the Euclid's Algorithm applied to $g(x)$ and $S(x)$.
Let $h$ be the minimum index such that $\deg \tau_{h}( x) < \lfloor \frac{t-1}{2} \rfloor.$
Then $(\sigma_{h}(x), \tau_{h}(x))$ satisfies Equations (\ref{eq0})-(\ref{eq1}).
Moreover, we can use Algorithm~\ref{alg1} to locate the error word $\boldsymbol{e}$.
\end{Theorem}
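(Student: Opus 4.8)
The plan is to identify the Euclidean pair $(\sigma_h(x),\tau_h(x))$ with the genuine error-locator/evaluator pair $(\sigma(x),\tau(x))$ constructed just before Theorem~\ref{unithm1}, and then transport that pair's properties back to $(\sigma_h,\tau_h)$. Concretely, I would first verify that the true pair satisfies the three hypotheses of the cited theorem of Sugiyama applied to $g(x)=x^{t-1}$ and $S(x)$, deduce $\sigma=\lambda\sigma_{h'}$ and $\tau=\lambda\tau_{h'}$ for a scalar $\lambda\neq 0$ and a unique index $h'$, and finally show that the index $h$ singled out in the statement coincides with $h'$. Correctness of Algorithm~\ref{alg1} then follows from the root and value formulas already derived.

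First I would recall that, since $|J|\le\lfloor\frac{d-1}{2}\rfloor$, the polynomials $\sigma(x)=\prod_{j\in J}(x-\alpha_j^{-1})$ and $\tau(x)$ satisfy Equations (\ref{eq0})--(\ref{eq1}): one has $\gcd(\sigma,\tau)=1$, $S(x)\sigma(x)\equiv\tau(x)\pmod{x^{t-1}}$, and $\deg\tau<\deg\sigma=|J|$. To apply Sugiyama's theorem I must only check hypothesis (2), namely $\deg\sigma+\deg\tau<\deg g=t-1$. Because $\deg\tau\le\deg\sigma-1=|J|-1$, we get $\deg\sigma+\deg\tau\le 2|J|-1\le 2\lfloor\frac{d-1}{2}\rfloor-1$, and a short case check ($d=t$ with $t$ even or odd, or $d=t+1$ with $t$ odd) shows this quantity is at most $t-2<t-1$ in every instance of Case~1. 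Hypotheses (1) and (3) are immediate, so Sugiyama's theorem supplies a unique index $h'$ and a $\lambda\in\mathbb{F}_q$ with $\sigma=\lambda\sigma_{h'}$, $\tau=\lambda\tau_{h'}$; since $\sigma\neq 0$ forces $\lambda\neq 0$, the degrees match, $\deg\tau_{h'}=\deg\tau$ and $\deg\sigma_{h'}=\deg\sigma$. As $\deg\sigma\le\lfloor\frac{d-1}{2}\rfloor\le\frac{t-1}{2}$ and $\deg\tau<\frac{t-1}{2}$, the ``moreover'' clause identifies $h'$ as the smallest index with $\deg\tau_{h'}<\frac{t-1}{2}$.

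The main point to settle is that the statement's index $h$ (the smallest index with $\deg\tau_h<\lfloor\frac{t-1}{2}\rfloor$) equals $h'$; this is the one place where the floor threshold versus the half-integer threshold $\frac{t-1}{2}$ could cause a mismatch when $t$ is even, and hence the step I expect to require the most care. I would use the fact that the remainder degrees $\deg\tau_{-1}>\deg\tau_0>\deg\tau_1>\cdots$ strictly decrease, so ``smallest index below a threshold'' depends only on which integers lie below it. On one hand $\lfloor\frac{t-1}{2}\rfloor\le\frac{t-1}{2}$ forces the index set defining $h$ to be contained in that defining $h'$, giving $h\ge h'$. On the other hand the \emph{strict} gap $\deg\tau_{h'}=\deg\tau\le|J|-1\le\lfloor\frac{d-1}{2}\rfloor-1<\lfloor\frac{t-1}{2}\rfloor$ shows that $h'$ itself already meets the floor threshold, whence $h\le h'$; this is exactly where the inequality $\deg\tau<\deg\sigma$ is essential, as it keeps the evaluator degree safely under the floor even for even $t$. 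Therefore $h=h'$, and $(\sigma_h,\tau_h)=\lambda(\sigma,\tau)$, so $(\sigma_h,\tau_h)$ satisfies Equations (\ref{eq0})--(\ref{eq1}).

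Finally, for the decoding conclusion I would read off $J=\{\,i:\sigma_h(\alpha_i^{-1})=0,\ 1\le i\le n\,\}$ from the roots of $\sigma_h$, and recover each error value via the formula $e_i=-\frac{\alpha_i\tau_h(\alpha_i^{-1})}{w_i\sigma_h'(\alpha_i^{-1})}$ established before Theorem~\ref{unithm1}; the unknown scalar $\lambda$ occurs in both $\tau_h$ and $\sigma_h'$ and hence cancels, so Algorithm~\ref{alg1} returns the correct error word $\boldsymbol{e}$. The degenerate no-error case, in which the syndrome $\boldsymbol{s}$ vanishes and $S(x)=0$, is handled separately by outputting $\boldsymbol{e}=\boldsymbol{0}$.
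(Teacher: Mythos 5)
Your proposal is correct and follows exactly the route the paper intends (and leaves implicit, since it states this theorem without a written proof): verify the hypotheses of the quoted Sugiyama theorem for $g(x)=x^{t-1}$, invoke the uniqueness from Theorem~\ref{unithm1}, and match the index $h$ with the one produced by the Euclidean algorithm. Your extra care with the threshold $\lfloor\frac{t-1}{2}\rfloor$ versus $\frac{t-1}{2}$ for even $t$, and with the zero-syndrome case, fills in details the paper glosses over; both checks are sound.
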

\begin{algorithm}
    \SetAlgoLined
    \SetKwInOut{Input}{input}\SetKwInOut{Output}{output}
    \Input{$\boldsymbol{r}:=(r_{1},r_{2},\ldots,r_{n})\in\mathbb{F}_{q}^{n}$.}
    \Output{$\boldsymbol{c}:=(c_{1},c_{2},\ldots,c_{n})\in\mathbb{F}_{q}^{n}$.}

    $\boldsymbol{s}=H_{1}\boldsymbol{r}^{T}=(s_{0},\ldots,s_{t-2})^{T}$, $S(x)=\sum_{i=0}^{t-2}s_{i}x^{i}$\;
    $\tau_{-1}(x)=g(x)$, $\tau_{0}(x)=S(x)$, $\sigma_{-1}(x)=0$, $\sigma_{0}(x)=1$, $h = -2$\;
    \Repeat{$\deg\tau_{h+2}(x)<\frac{t}{2}$}{
        $h = h+1$, $q_{h+2}(x) = \tau_h(x)$ div $\tau_{h+1}$\;
        $\tau_{h+2} = \tau_h$ mod $\tau_{h+1}$, $\sigma_{h+2} = \sigma_h- q_h\cdot\sigma_{h+1}$\;
    }
    $\sigma(x)=\sigma_{h+2}(x)$, $\tau(x)=\tau_{h+2}(x)$\;
    \For{$i=1,...,n$ }{
        $
            e_{i}= \begin{cases}
                -\frac{\alpha_i\tau(\alpha_{i}^{-1})}{w_{i}\sigma^{\prime}(\alpha_{i}^{-1})}, & \text{if} \ \sigma(\alpha_i^{-1}) = 0, \\
                0,  & \text{otherwise}.
            \end{cases}
        $
    }
    Output $\boldsymbol{e}=(e_{1},e_{2},\ldots,e_{n})$ and $\boldsymbol{c}=\boldsymbol{r}-\boldsymbol{e}.$
    \caption{$\lfloor \frac{d-1}{2} \rfloor$ Error-Correcting Decoding Algorithm for TGRS Codes}
    \label{alg1}
\end{algorithm}

\begin{Remark}
    In fact, Algorithm~\ref{alg1} is capable of correcting errors in twisted Goppa codes defined in \cite{Sui3}.
    Compared to the error correction algorithm presented in \cite{Sui3},
    it possesses the same error detection and correction capabilities but exhibits superior performance,
    as we have omitted some unnecessary calculations.
    More specifically, during the decoding process, the algorithm in \cite{Sui3} uses the matrix $H$ in (\ref{parh}),
    while Algorithm~\ref{alg1} uses the submatrix $H_1$ in (\ref{submala}).
    This feature can save some computational effort during the decoding process.
\end{Remark}

\begin{Theorem}
    Let $C$ be an MDS TGRS $[ n, n- t, t+ 1]$ code as given in Definition~\ref{deftgrs1}, where $t$ is even.
Let $\boldsymbol{r}$ be a received word with $d(\boldsymbol{r},C)\le \frac{t}{2}$,
$S(x)$ the syndrome polynomial of $\boldsymbol{r}$ as given in Equation (\ref{ss1}), and $g(x)= x^{t}.$
Let $\sigma_{i}( x)$ and $\tau_{i}(x)$ for $i \in \{ - 1, 0, \ldots , v + 1\}$ be the polynomials from the Euclid's Algorithm applied to $g(x)$ and $S(x)$.
Let $h$ be the minimum index such that $\deg \tau _{h}( x) < \frac{t}{2}.$

\noindent (1) If $\deg \sigma_{h}(x) < \frac{t}{2}$, then $( \sigma_{h}(x), \tau_{h}(x))$ satisfies  Equations (\ref{sxx1})-(\ref{sxx11}) and $d(\boldsymbol{r}, C) < \frac{t}{2}$.

\noindent (2) If $\deg \sigma_{h}(x) = \frac{t}{2}$, then there exists $\lambda \in \mathbb{F}_{q}$ such that $(\lambda \sigma_{h- 1}(x) + \sigma_{h}(x), \lambda \tau_{h-1}(x) + \tau_{h}(x))$
satisfies Equations (\ref{sxx1})-(\ref{sxx2}), $d(\boldsymbol{r}, C) = \frac{t}{2}$,
and $\lambda$ is one of the most frequent elements in the set $\mathcal{B}$, where
\[
    \mathcal{B}=\{\beta_i | i \le i \le n \}\backslash \infty \ \text{and} \ \beta_i = \begin{cases}
        \sigma_{h-1}(\alpha_i)^{-1} \sigma_{h}(\alpha_i), \ & \text{if} \ \sigma_{h-1}(\alpha_i) \ne 0, \\
        \infty,  & \text{if} \ \sigma_{h-1}(\alpha_i) = 0.
    \end{cases}
\]
Moreover, we can use Algorithm~\ref{alg2} to locate the error word $\boldsymbol{e}$.
\end{Theorem}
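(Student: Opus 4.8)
The plan is to translate the two cases of the statement into statements about the number of errors $|J|=d(\boldsymbol{r},C)$ and then assemble the structural results already proved. First I would record the standard facts about the extended Euclid's Algorithm applied to $g(x)=x^{t}$ and $S(x)$: the congruence $\sigma_i(x)S(x)\equiv\tau_i(x)\pmod{x^{t}}$ holds for every $i$, the degrees $\deg\sigma_i$ are strictly increasing, the degrees $\deg\tau_i$ are strictly decreasing, and $\deg\sigma_i+\deg\tau_{i-1}=\deg g=t$. Since $h$ is the minimum index with $\deg\tau_h<\tfrac t2$, we have $\deg\tau_{h-1}\ge\tfrac t2$, whence $\deg\sigma_h=t-\deg\tau_{h-1}\le\tfrac t2$. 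Thus $\deg\sigma_h\le\tfrac t2$ always, so the dichotomy $\deg\sigma_h<\tfrac t2$ versus $\deg\sigma_h=\tfrac t2$ in parts (1) and (2) is exhaustive.

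Next I would establish the equivalence $\deg\sigma_h<\tfrac t2\Longleftrightarrow|J|<\tfrac t2$. By Theorem~\ref{unithm3} the true error pair $(\sigma(x),\tau(x))$ from Case 2 is unique up to the leading coefficient of $\sigma$, with $\deg\sigma=|J|=d(\boldsymbol{r},C)\le\tfrac t2$ and satisfying (\ref{sxx1})-(\ref{sxx11}). The key observation is that $\deg\sigma_{h-1}<\deg\sigma_h$, so every combination $\lambda_1\sigma_{h-1}+\sigma_h$ has degree exactly $\deg\sigma_h$. Hence if $|J|=\tfrac t2$, Theorem~\ref{abc} writes the true $\sigma$ as $\lambda_2(\lambda_1\sigma_{h-1}+\sigma_h)$ with $\lambda_2\ne0$, forcing $\deg\sigma_h=\tfrac t2$; and if $|J|<\tfrac t2$, the Sugiyama argument below forces $(\sigma_h,\tau_h)$ to be a scalar multiple of the true pair, so $\deg\sigma_h=|J|<\tfrac t2$. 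This equivalence simultaneously yields the distance claims $d(\boldsymbol{r},C)<\tfrac t2$ in part (1) and $d(\boldsymbol{r},C)=\tfrac t2$ in part (2).

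For part (1), once $|J|<\tfrac t2$ is known, both $\deg\sigma<\tfrac t2$ and $\deg\tau\le\deg\sigma<\tfrac t2$ hold, so the theorem of Sugiyama \cite{Sugiyama} quoted above applies; its ``moreover'' clause identifies the minimum index $h$ with $\deg\tau_h<\tfrac t2\le\deg\tau_{h-1}$ and yields $(\sigma(x),\tau(x))=\lambda(\sigma_h(x),\tau_h(x))$ for some $\lambda\in\mathbb{F}_q^{*}$, so $(\sigma_h,\tau_h)$ itself satisfies (\ref{sxx1})-(\ref{sxx11}). For part (2), with $|J|=\tfrac t2$, I would apply Theorem~\ref{abc} to write the true pair as $\lambda_2(\lambda_1\sigma_{h-1}+\sigma_h)$ and $\lambda_2(\lambda_1\tau_{h-1}+\tau_h)$, then invoke Theorem~\ref{newthm} to pin down $\lambda_1$ as a most frequent element of $\mathcal{B}$; since scaling by $\lambda_2$ changes neither the roots of $\sigma$ nor the error values, taking $\lambda=\lambda_1$ in $(\lambda\sigma_{h-1}+\sigma_h,\lambda\tau_{h-1}+\tau_h)$ gives a pair satisfying (\ref{sxx1})-(\ref{sxx2}). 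In both cases the recovered pair determines $J=\{i:\sigma(\alpha_i)=0\}$ and the error values through $e_i=-\tau(\alpha_i)/(w_i\alpha_i^{t}\sigma'(\alpha_i))$ as in (\ref{sxx2}), which is exactly what Algorithm~\ref{alg2} computes, so $\boldsymbol{c}=\boldsymbol{r}-\boldsymbol{e}$ is returned correctly.

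I expect the main obstacle to be the boundary case $|J|=\tfrac t2$ of part (2): here the key equation alone does not single out the true error locator within the one-parameter family $\lambda\mapsto\lambda\sigma_{h-1}+\sigma_h$, and the raw Euclidean output $(\sigma_h,\tau_h)$ need not be the correct locator. Overcoming this relies on the MDS hypothesis through Theorem~\ref{unithm3} (Subcase 2) for uniqueness and on the maximum-number-of-roots selection of Theorem~\ref{newthm}; the delicate point is verifying that the correct $\lambda_1$ is genuinely the value maximizing the number of roots of $\lambda\sigma_{h-1}+\sigma_h$ among $\{\alpha_1,\dots,\alpha_n\}$, which is precisely where the counting argument behind $\mathcal{B}$ enters.
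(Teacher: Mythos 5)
Your proposal is correct and follows essentially the same route the paper intends: the paper states this theorem without an explicit proof, treating it as a direct consequence of Theorem \ref{unithm3}, the quoted Sugiyama theorem, Theorem \ref{abc}, and Theorem \ref{newthm}, which is precisely the machinery you assemble. Your added bookkeeping (the identity $\deg\sigma_h+\deg\tau_{h-1}=\deg g$ and the strict monotonicity of $\deg\sigma_i$, used to show the two cases are exhaustive and to establish the equivalence $\deg\sigma_h<\tfrac t2\Leftrightarrow d(\boldsymbol{r},C)<\tfrac t2$) correctly fills in the details the paper leaves implicit.
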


\begin{algorithm}
    \SetKwInOut{Input}{input}\SetKwInOut{Output}{output}
    \Input{$\boldsymbol{r}:=(r_{1},r_{2},\ldots,r_{n})\in\mathbb{F}_{q}^{n}$.}
    \Output{$\boldsymbol{c}:=(c_{1},c_{2},\ldots,c_{n})\in\mathbb{F}_{q}^{n}$.}
    $\boldsymbol{s}=H_{1}\boldsymbol{r}^{T}=(s_{0},\ldots,s_{t-1})^{T}, S(x)=\sum_{i=0}^{t-1-i}s_{i}x^{i}$\;
    $\tau_{-1}(x)=g(x), \tau_{0}(x)=S(x)$, $\sigma_{-1}(x)=0$, $\sigma_{0}(x)=1$, $h = -2$\;
    \Repeat{$\deg\tau_{h+2}(x)<\frac{t}{2}$}{
        $h = h+1$, $q_{h+2}(x) = \tau_h(x)$ div $\tau_{h+1}$\;
        $\tau_{h+2} = \tau_h$ mod $\tau_{h+1}$, $\sigma_{h+2} = \sigma_h- q_h\cdot\sigma_{h+1}$\;
    }
    \eIf{$\deg \sigma_{h+2}(x)<\frac{t}{2}$}{
        $\sigma(x)=\sigma_{h+2}(x)$, $\tau(x)=\tau_{h+2}(x)$\;
        \For{$i=1,...,n$ }{
            $
                e_{i}= \begin{cases}
                    -\frac{\tau(\alpha_{i})}{w_{i}\alpha_{i}^t\sigma^{\prime}(\alpha_{i})}, & \text{if} \ \sigma(\alpha_i) = 0, \\
                    0,  & \text{otherwise}.
                \end{cases}
            $
        }
    }{
        \For{$i=1,...,n$ }{
            $
                \beta_i = \begin{cases}
                    \sigma_{h+1}(\alpha_i)^{-1} \sigma_{h+2}(\alpha_i), \ & \text{if} \ \sigma_{h+1}(\alpha_i) \ne 0, \\
                    \infty,  & \text{if} \ \sigma_{h+1}(\alpha_i) = 0.
                \end{cases}
            $
        }
        \For{$\lambda$ in $FrequentEle(\{\beta_i\})$}{
            \tcp{Obtain all most frequent elements of set $\{\beta_i\}$ with $\infty$ excluded.}
            $\sigma(x)=\lambda\sigma_{h+1}(x)+\sigma_{h+2}(x)$, $\tau(x)=\lambda\tau_{h+1}(x)+\tau_{h+2}(x)$\;
            \For{$i=1,...,n$ }{
            $
                e_{i}= \begin{cases}
                    -\frac{\tau(\alpha_{i})}{w_{i}\alpha_{i}^t\sigma^{\prime}(\alpha_{i})}, & \text{if} \ \sigma(\alpha_i) = 0, \\
                    0,  & \text{otherwise}.
                \end{cases}
            $
            }
        \If{
            $\tau(x)=a \sigma(x)+\omega(x), a=\sum_{j \in J} e_jw_j f(\alpha_j) \in \mathbb{F}_q, \omega(x)=-\sum_{j\in J}e_jw_j\alpha_j^t\frac{\sigma(x)}{x-\alpha_j}$
        }{
            break\;
        }
        }
    }
    Output $\boldsymbol{e}=(e_{1},e_{2},\ldots,e_{n})$ and $\boldsymbol{c}=\boldsymbol{r}-\boldsymbol{e}.$
\caption{$\lfloor \frac{t}{2} \rfloor$ Error-Correcting Decoding Algorithm for TGRS Codes}
\label{alg2}
\end{algorithm}

\begin{Remark}
The decoding algorithm for TGRS codes in \cite{Sun1} employed an exhaustive search of $\lambda \in \mathbb{F}_q$ to determine the polynomial pair $(\sigma(x), \tau(x))$ when decoding up to $\frac{t}{2}$ errors.
In contrast, when decoding TGRS codes with up to $\frac{t}{2}$ errors
using the approach outlined in Theorem \ref{newthm}, we can search for $\lambda$ within a smaller,
more restricted range $\mathcal{B}$ (see (\ref{betadef})) to determine the polynomial pair $(\sigma(x), \tau(x))$.
This results in our decoding algorithm having better performance, 
Detailed comparison results can be found in the conclusion of this paper.
\end{Remark}

In the following, we use an example to demonstrate the decoding process of Algorithm~\ref{alg2}.
\begin{Example}
    Let $\mathbb{F}_{2^6}=\mathbb{F}_2\langle z \rangle$ with $z^6 + z^4 + z^3 + z + 1 = 0$.
Let $\boldsymbol{\alpha} = (\alpha_1, . . . , \alpha_8) = ( z^{33}, z^{56}, z^{47}, z^{3}, z^{25}, z^{50}, z^{20}, z^{32} )$,
$\boldsymbol{v} = (v_1, . . . , v_8) = (z^{56}, z^{45}, z^{28}, z^{59}, z^{60}, z^{25}, z^{53}, z^{13})$ and $\eta = z^{39}$.
Let $C_3=\TGRS_{4,4}(\boldsymbol{\alpha}, \boldsymbol{v}, 2, z^{39}, \boldsymbol{1})$ be an MDS TGRS code over $\mathbb{F}_{2^6}$ with a generator matrix
$$G_3=\begin{pmatrix}
    v_1&\cdots&v_8  \\
    v_1(\alpha_1+\eta\alpha_1^4)&\cdots&v_8(\alpha_8+\eta\alpha_8^4) \\
    v_1\alpha_1^2&\cdots&v_8\alpha_8^2 \\
    v_1\alpha_1^3&\cdots&v_8\alpha_8^3
\end{pmatrix}=
\begin{pmatrix}
    z^{56}& z^{45} & z^{28} &z^{59} & z^{60} & z^{25} & z^{53} &z^{13}\\
z^{15} & z^{29} & z^{30} &z^{18} & z^{62}  &  0 &z^{55} &  z^{9}\\
z^{59} & z^{31} & z^{59} & z^{2} & z^{47} & z^{62} &z^{30} &z^{14}\\
z^{29} & z^{24} & z^{43} &z^{5} & z^{9} &z^{49} & z^{50} &z^{46}
\end{pmatrix},$$ and a parity-check matrix
$$H_3=\begin{pmatrix}
    w_1&\cdots&w_8  \\
    w_1\alpha_1&\cdots&w_8\alpha_8 \\
    w_1\alpha_1^2&\cdots&w_8\alpha_8^2 \\
    w_1(\alpha_1^3 + f(\alpha_1))&\cdots&w_8 (\alpha_8^3 + f(\alpha_8))
\end{pmatrix}=
\begin{pmatrix}
    z^{6} &z^{53} &z^{32} &z^{24} &z^{42} &z^{13} &z^{19} &z^{26} \\
    z^{39} &z^{46} &z^{16} &z^{27}  &z^{4}    &1 &z^{39} &z^{58} \\
    z^{9} &z^{39} &   1 &z^{30} &z^{29} &z^{50} &z^{59} &z^{27} \\
    z^{39} &z^{52} &z^{33} &z^{15} &z^{49} &z^{13} &z^{47} &z^{62}
\end{pmatrix},$$
where $f=x^6 + z^{44}x^5 + z^{19}x^4 + x^3$. Assume that $\boldsymbol{c} = (z^{9}, z^{25}, z^{56}, z^{26}, z^{45}, z^{59}, z^{19}, z^{13})$ and $\boldsymbol{e} = (0, 0, z^{7}, 0, 0, 0, z^{36}, 0)$.
Then the received word is $\boldsymbol{r} = \boldsymbol{c} + \boldsymbol{e} = ( z^{9}, z^{25}, z^{9}, z^{26}, z^{45}, z^{59}, z^{58}, z^{13})$.
Input $\boldsymbol{r}$ to Algorithm~\ref{alg2}. Then $\boldsymbol{s} = (z^{53}, z^{35}, z^2,z^{14})^T$ and $S(x) =z^{53}x^3 + z^{35}x^2 + z^{2}x + z^{14}$.
Applying the Euclid's Algorithm to $x^4$ and $S(x)$,
we have Table 1.
\begin{table}[H]
    \fontsize{10pt}{6pt}
    \caption{The Euclid's Algorithm process}
        \centering
        \renewcommand\arraystretch{0.9}
        \begin{tabular}{cccc} \hline
        $j$ & $q_j(x)$ &  $\sigma_j(x)$ & $\tau_j(x)$  \\ \hline
        $-1$ &   &  $0$  & $x^4$   \\
        $0$ &   &  $1$  & $S(x)$  \\
        $1$ & $z^{10}x + z^{55}$  &  $z^{10}x + z^{55}$  & $z^{46}x^2 + z^{62}x + z^{6}$ \\
        $2$ & $z^{7}x + z^{4}$  &  $z^{17}x^2 + z^{33}x + z^{31}$ & $z^{49}x + z^{45}$ \\ \hline
    \end{tabular}
\end{table}
Here $h = 2$ is the minimum index such that  $\deg \sigma_h(x) = \frac{t}{2} = 2$
and $\deg \tau_h(x) < 2$. Then
the set $\{\beta_j\}=\{z^{22}, z^{38}, z^{26}, z^{22}, z^{20}, z^{44}, z^{26}, z^{5}\}$ has the most frequent elements $z^{22}$ and $z^{26}$.

Set
$(\sigma(x), \tau (x)) = (z^{22}\sigma_1(x) + \sigma_2(x), z^{22}\tau_1(x) + \tau_2(x))$.
Then $\sigma(x) = z^{17}x^2 + z^{25}x + z^{53}$ and $\tau (x) = z^5x^2 + z^{53}x + z^4$.
Following the calculation, $J = \{1, 4\}$, $\boldsymbol{e} = ( 1, 0, 0, z^{43}, 0, 0, 0, 0 )$, and $a=\sum_{i \in J}  e_iw_i f(\alpha_i)=z^{62}$.
It is easy to verify that $\tau(x)$ is not equal to $a\sigma(x) + \omega(x)$. Thus it can be eliminated.

Next, set
$(\sigma(x), \tau (x)) = (z^{26}\sigma_1(x) + \sigma_2(x), z^{26}\tau_1(x) + \tau_2(x))$.
Then $\sigma(x) = z^{17}x^2 + z^{46}x + z^{21}$ and $\tau (x) = z^{9}x^2 + z^{3}x + z^{35}$.
Following the calculation, $J = \{1, 4\}$, $\boldsymbol{e} = ( 0, 0, z^7, 0, 0, 0, z^{36}, 0  )$, and $a=\sum_{i \in J}  e_iw_i f(\alpha_i)=z^{55}$.
After verification, we can get that $\tau(x)$ is equal to $a\sigma(x) + \omega(x)$.

Finally, the output $\boldsymbol{c} = \boldsymbol{r} - \boldsymbol{e} = (z^{9}, z^{25}, z^{56}, z^{26}, z^{45}, z^{59}, z^{19}, z^{13})$.
\end{Example}

Here, we present a very specific example to demonstrate that there can be multiple elements in the set $\mathcal{B}$ (see (\ref{betadef})) with the highest frequency of occurrence.
In the above example, there are two such elements.
In fact, through computations and observations, we have found that in most cases,
there is only one element in the set $\mathcal{B}$ that appears most frequently.

\section{Twisted Goppa Codes}
Classical Goppa codes were introduced by Goppa in 1970 (\cite{Goppa1, Goppa2}).
Goppa codes are subfield subcodes of a class of GRS codes.
Similarly, twisted Goppa codes are subfield subcodes of a class of TGRS codes (\cite{Sui3, Sun1}).
In this section, we extend the definitions of twisted Goppa codes.
The decoding algorithms for TGRS codes that we provided above can be applied to the Goppa codes defined as follows.

Let $q=p^m$, where $p$ is a prime and $m$ is a positive integer.

\begin{Definition}
    Let $g(x)$ be a monic polynomial of degree t over $\mathbb{F}_{p^m}$,
$\mathcal{L}=\{\alpha_i\,|\, 1\le i \le n\} \subseteq \mathbb{F}_{p^m}$ a defining set such that $g(\alpha_i) \neq 0$
for all $\alpha_i \in \mathcal{L}$, and $f(x) \in \mathbb{F}_{p^m}[x]$.
Then a twisted Goppa code over $\mathbb{F}_p$
with respect to $\mathcal{L}$, $g(x)$ and $f(x)$ is defined as
\[
    \Gamma(\mathcal{L}, g, f)=\left\{c=(c_1,...,c_n) \in \mathbb{F}_p^n\,|\, \sum_{i=1}^nc_i\left(\frac{1}{x-\alpha_i} - \frac{f(\alpha_i)}{g(\alpha_i)}\right)\equiv 0 \pmod{g(x)}\right\}.
\]
Note that if $f(x)=0$, then $\Gamma(\mathcal{L}, g, f)$ is the Goppa code.
\end{Definition}

\begin{Proposition}
Assume the notation is as given above. Then
    $$\Gamma(\mathcal{L}, g, f)=\{\boldsymbol{c}=(c_1,...,c_n) \in \mathbb{F}_p^n\,|\, H\boldsymbol{c}^T = 0\},$$
where
\begin{align}\label{goppach}
    H=\begin{pmatrix}
        \frac{1}{g(\alpha_1)}  & \cdots & \frac{1}{g(\alpha_n)} \\
        \frac{1}{g(\alpha_1)} \alpha_1  & \cdots & \frac{1}{g(\alpha_n)} \alpha_n \\
        \vdots &  \ddots & \vdots \\
        \frac{1}{g(\alpha_1)} \alpha_1^{t-2}  & \cdots & \frac{1}{g(\alpha_n)} \alpha_n^{t-2} \\
        \frac{1}{g(\alpha_1)} (\alpha_1^{t-1} + f(\alpha_1))  & \cdots & \frac{1}{g(\alpha_n)} (\alpha_n^{t-1} +f(\alpha_n))
    \end{pmatrix}.
\end{align}
\begin{proof}

Let $g(x)=\sum_{j=0}^tg_jx^j \in \mathbb{F}_{p^m}[x]$ with $g_t=1$.
Then in the quotient ring $\mathbb{F}_{p^m}[x]/(g(x))$,
\begin{align*}
    \frac{1}{x-\alpha_i}-\frac{f(\alpha_i)}{g(\alpha_i)} & = -\frac{1}{g(\alpha_i)}\left(\frac{g(x)-g(\alpha_i)}{x-\alpha_i} + f(\alpha_i)\right) \\
    & = -\frac{1}{g(\alpha_i)}\left(\sum_{j=1}^tg_j\sum_{l=0}^{j-1}x^l\alpha_i^{j-l-1} + f(\alpha_i)\right) \\
    & = -\frac{1}{g(\alpha_i)}\left(\sum_{l=0}^{t-1}x^l\sum_{j=l+1}^tg_j\alpha_i^{j-l-1} + f(\alpha_i)\right).
\end{align*}
So, by the definition of twisted Goppa code, $\boldsymbol{c}=(c_1,...,c_n) \in \Gamma(\mathcal{L}, g, f)$ if and only if
\[
   \sum_{i=1}^n\frac{1}{g(\alpha_i)}\left(\sum_{l=0}^{t-1}x^l\sum_{j=l+1}^tg_j\alpha_i^{j-l-1} + f(\alpha_i)\right)c_i \equiv 0 \pmod{g(x)}.
\]
Therefore, setting the coefficients of $x^l$ equal to $0$, in the order $l=t-1,t-2,...,0$,
we have that $\boldsymbol{c} \in \Gamma(\mathcal{L}, g, f)$ if and only if $H' \boldsymbol{c}^T=\boldsymbol{0}$, where
\[
    H'=\begin{pmatrix}
        \frac{1}{g(\alpha_1)}  & \cdots & \frac{1}{g(\alpha_n)} \\
        \frac{1}{g(\alpha_1)} \sum_{i=t-1}^tg_i\alpha_1^{i-t+1}  & \cdots & \frac{1}{g(\alpha_n)} \sum_{i=t-1}^tg_i\alpha_n^{i-t+1} \\
        \vdots &  \ddots & \vdots \\
        \frac{1}{g(\alpha_1)} \sum_{i=2}^tg_i\alpha_1^{i-2}  & \cdots & \frac{1}{g(\alpha_n)}\sum_{i=2}^tg_i \alpha_n^{i-2} \\
        \frac{1}{g(\alpha_1)} (\sum_{i=1}^tg_i\alpha_1^{i-1} + f(\alpha_1))  & \cdots & \frac{1}{g(\alpha_n)} (\sum_{i=1}^tg_i\alpha_n^{i-1} +f(\alpha_n))
    \end{pmatrix}.
\]
Here, $H'$ can be row reduced to the $t \times n$ matrix in (\ref{goppach}).
\end{proof}
\end{Proposition}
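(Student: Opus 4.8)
The plan is to translate the defining congruence directly into linear conditions on $\boldsymbol{c}$ by computing the residues of $\frac{1}{x-\alpha_i}$ and of the constant $\frac{f(\alpha_i)}{g(\alpha_i)}$ modulo $g(x)$, and then to recognise the resulting coefficient conditions as the rows of a matrix that is row-equivalent to $H$. Throughout I would work in the quotient ring $\mathbb{F}_{p^m}[x]/(g(x))$ and write $g(x)=\sum_{j=0}^t g_j x^j$ with $g_t=1$.

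First I would invert $x-\alpha_i$ modulo $g(x)$. The hypothesis $g(\alpha_i)\neq 0$ for every $\alpha_i\in\mathcal{L}$ is exactly what makes $x-\alpha_i$ a unit in the quotient ring, and the explicit inverse comes from the identity $(x-\alpha_i)\cdot\frac{g(x)-g(\alpha_i)}{x-\alpha_i}=g(x)-g(\alpha_i)\equiv -g(\alpha_i)\pmod{g(x)}$, so that $\frac{1}{x-\alpha_i}\equiv -\frac{1}{g(\alpha_i)}\cdot\frac{g(x)-g(\alpha_i)}{x-\alpha_i}\pmod{g(x)}$. Expanding the polynomial quotient as $\frac{g(x)-g(\alpha_i)}{x-\alpha_i}=\sum_{j=1}^t g_j\sum_{l=0}^{j-1}x^l\alpha_i^{j-l-1}$ and incorporating the constant $f(\alpha_i)$ yields the residue of $\frac{1}{x-\alpha_i}-\frac{f(\alpha_i)}{g(\alpha_i)}$ as an explicit polynomial of degree at most $t-1$.

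Next I would interchange the order of summation to collect the coefficient of each power $x^l$, obtaining $\sum_{l=0}^{t-1}x^l\sum_{j=l+1}^t g_j\alpha_i^{j-l-1}$ together with the $f(\alpha_i)$ contribution in the constant term. Summing these residues against the codeword entries $c_i$ produces a single polynomial in $x$ of degree at most $t-1$; since this polynomial has degree strictly less than $\deg g=t$, being congruent to $0$ modulo $g(x)$ forces it to be identically zero, so every coefficient of $x^l$ must vanish. Reading off these $t$ coefficient equations, in the order $l=t-1,\dots,0$, gives precisely $H'\boldsymbol{c}^T=\boldsymbol{0}$ for the intermediate matrix $H'$.

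Finally I would show $H'$ is row-equivalent to $H$, which preserves the null space and hence the code. Because $g$ is monic, the top row of $H'$ (the coefficient of $x^{t-1}$) is already $\frac{1}{g(\alpha_i)}$; the row for $l=t-2$ equals $\frac{1}{g(\alpha_i)}(g_{t-1}+\alpha_i)$ and reduces to $\frac{1}{g(\alpha_i)}\alpha_i$ after subtracting $g_{t-1}$ times the first row, and inductively each row $l$ reduces to the clean monomial row $\frac{1}{g(\alpha_i)}\alpha_i^{t-1-l}$ by subtracting suitable multiples of the already-cleaned rows above it, while the bottom row ($l=0$) retains the twist and reduces to $\frac{1}{g(\alpha_i)}(\alpha_i^{t-1}+f(\alpha_i))$. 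I expect the only delicate point to be the index bookkeeping in this triangular elimination, namely verifying that the lower-order coefficients $g_1,\dots,g_{t-1}$ are cleared in the correct order so that each reduced row is exactly a single monomial, rather than any conceptual difficulty; the argument is ultimately a direct computation enabled by the invertibility of $x-\alpha_i$ and the monic normalization of $g$.
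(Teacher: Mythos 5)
Your proposal is correct and follows essentially the same route as the paper: inverting $x-\alpha_i$ modulo $g(x)$ via the quotient $\frac{g(x)-g(\alpha_i)}{x-\alpha_i}$, collecting coefficients of $x^l$ to obtain the intermediate matrix $H'$, and then row reducing $H'$ to $H$. The extra detail you supply on the degree argument and the triangular elimination is a faithful elaboration of the same computation.
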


\begin{Remark}
    When $\boldsymbol{w}=(w_1,...,w_n)$ is taken as $(\frac{1}{g(\alpha_1)},...,\frac{1}{g(\alpha_n)})$,
    the code $\Gamma(\mathcal{L}, g, f)$ has a parity-check matrix in the form $H$ given in (\ref{parh}).
    Therefore, $\Gamma(\mathcal{L}, g, f)$ is a subfield subcode of TGRS code $C$ mentioned in the beginning of Section 4.
\end{Remark}

Based on the relationship between a code and its subfield subcode, we can easily draw the following conclusion.
\begin{Proposition}
    Let $\Gamma(\mathcal{L}, g, f)$ be an $[n,k,d]$ linear code over $\mathbb{F}_{p}$.
Then

(1) $d \ge t+1$, if the code with the parity check matrix (\ref{goppach}) is MDS,

(2) $d \ge t$, if the code with parity the check matrix (\ref{goppach}) is almost-MDS,

\noindent and $k \ge n-mt$, where $t$ denotes the degree of the polynomial $g(x)$.
\end{Proposition}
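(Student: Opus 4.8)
The plan is to exploit the fact, recorded in the Remark immediately preceding this Proposition, that $\Gamma(\mathcal{L}, g, f)$ is exactly the subfield subcode of the TGRS code $C$ over $\mathbb{F}_{p^m}$ whose parity-check matrix is the matrix $H$ from (\ref{goppach}); in other words $\Gamma(\mathcal{L}, g, f) = C \cap \mathbb{F}_p^n$. Both the minimum-distance bounds and the dimension bound then reduce to two standard features of subfield subcodes, which I would establish in turn. Since $H$ has $t$ rows that are independent over $\mathbb{F}_{p^m}$ for these TGRS codes, the code $C$ is an $[n, n-t, d(C)]$ code over $\mathbb{F}_{p^m}$.

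First, for the minimum distance, I would use the containment $\Gamma(\mathcal{L}, g, f) \subseteq C$. Every nonzero codeword of $\Gamma(\mathcal{L}, g, f)$ is in particular a nonzero codeword of $C$, hence has Hamming weight at least $d(C)$, so $d \ge d(C)$. If $C$ is MDS, then its parameters force $d(C) = n-(n-t)+1 = t+1$, which gives part (1); if $C$ is almost-MDS, then $d(C) = n-(n-t) = t$, which gives part (2).

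Second, for the dimension bound, I would fix an $\mathbb{F}_p$-basis $\{b_1, \ldots, b_m\}$ of $\mathbb{F}_{p^m}$ and write $H = \sum_{j=1}^m b_j H_j$, where each $H_j$ is a $t \times n$ matrix with entries in $\mathbb{F}_p$. For a vector $\boldsymbol{c} \in \mathbb{F}_p^n$ the products $H_j \boldsymbol{c}^T$ lie in $\mathbb{F}_p^t$, so
\[
H\boldsymbol{c}^T = \sum_{j=1}^m b_j\,(H_j \boldsymbol{c}^T) = 0
\]
holds if and only if $H_j \boldsymbol{c}^T = 0$ for every $j$, by the $\mathbb{F}_p$-linear independence of the $b_j$. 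Hence $\Gamma(\mathcal{L}, g, f)$ is the $\mathbb{F}_p$-kernel of the stacked matrix obtained by placing $H_1, \ldots, H_m$ on top of one another, which has $mt$ rows. Its rank over $\mathbb{F}_p$ is at most $mt$, so $k = \dim_{\mathbb{F}_p} \Gamma(\mathcal{L}, g, f) \ge n - mt$.

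This is essentially routine subfield-subcode theory, so I do not expect a serious obstacle. The only point requiring care is the passage from the single equation $H\boldsymbol{c}^T = 0$ over $\mathbb{F}_{p^m}$ to the system $H_j \boldsymbol{c}^T = 0$ over $\mathbb{F}_p$: here one must invoke that $\boldsymbol{c}$ has \emph{all} coordinates in $\mathbb{F}_p$, so that each $H_j \boldsymbol{c}^T$ is genuinely $\mathbb{F}_p$-valued and the $b_j$-expansion of $H\boldsymbol{c}^T$ is the unique expansion in the chosen basis, allowing the coefficients to be equated to zero componentwise.
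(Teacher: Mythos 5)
Your proof is correct and follows exactly the route the paper intends: the paper offers no written proof, merely remarking that the conclusion follows ``based on the relationship between a code and its subfield subcode,'' and your argument spells out precisely those two standard facts (minimum distance inherited from the parent TGRS code via the containment $\Gamma(\mathcal{L},g,f)\subseteq C$, and the dimension bound from expanding $H$ over an $\mathbb{F}_p$-basis into an $mt\times n$ parity-check matrix). Both steps, including the care you take in equating the basis coefficients of $H\boldsymbol{c}^T$ to zero for $\boldsymbol{c}\in\mathbb{F}_p^n$, are sound.
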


When performing $\lfloor \frac{t-1}{2} \rfloor$ or $\lfloor \frac{t}{2} \rfloor$ error-correction decoding on the $[n,k,d]$ $\Gamma(\mathcal{L}, g, f)$ code,
we can utilize the previously discussed theoretical results and make slight modifications to Algorithms \ref{alg1} and \ref{alg2} for their application.
Therefore, we do not elaborate further on this point.

\section{Conclusions}

In this paper, we studied the decoding of a more general class of twisted generalized Reed-Solomon codes and provided a more precise characterization of the key equation for TGRS codes.
This characterization aided in optimizing the algorithm presented in \cite{Sun1}, and we also proposed the optimized decoding algorithm.
We further studied the decoding of almost-MDS TGRS codes and provided the optimized decoding algorithm which is more efficient than the decoding algorithm in \cite{Sui3} in performance.
The optimized decoding algorithms can be applied to the decoding of a more general class of twisted Goppa codes.

The following table compares the decoding times between Algorithm~\ref{alg2} in this paper and Algorithm~2 in \cite{Sun1}.
For each parameter of TGRS codes, two samples were selected, and the decoding algorithm was repeatedly performed 10,000 times to record the time consumption (Units: seconds).
During each decoding run, $\lfloor \frac{d-1}{2} \rfloor$ new random errors were generated.
For the convenience of our comparative testing, we made partial adjustments to Algorithm 2 in \cite{Sun1} so that it could be applied to the TGRS codes defined in this paper.
All computations were performed on a Windows 10 system with an Intel Core i3-10100 processor using Magma \cite{Bosma} (version 2.25-3).
\footnote{The Magma code can be found in https://github.com/1wangguodong/Decoding-twisted-generalized-Reed-Solomon-Codes}

\begin{table}[H]
    \caption{Performance comparison}
    \begin{center}
    { 
        \renewcommand\arraystretch{0.88}
        \begin{tabular}{cccccccc|ccccccccc} \hline
        $n$ & $k$ &  $d$ & $r$ & $t_1$ & $t_2$ & $t_1'$ & $t_2'$ & $n$ & $k$ &  $d$ & $r$ & $t_1$ & $t_2$ & $t_1'$ & $t_2'$   \\ \hline \hline
         13   &  9  & 5  & 1 & 17.532 & 1.281 & 15.500 & 1.313 & 11   &  5  & 7  & 1 &  16.437 & 1.453 & 17.985 & 1.468 \\
         13   &  9  & 5  & 2 & 16.219 & 1.375 & 16.016 & 1.406 & 11   &  5  & 7  & 2 & 17.218 & 1.407 & 17.297 & 1.438 \\
         13   &  9  & 5  & 3 & 16.531 & 1.407 & 15.343 & 1.328 & 11   &  5  & 7  & 3 & 17.625 & 1.391 & 14.297 & 1.219 \\
         13   &  9  & 5  & 4 & 17.532 & 1.453 & 15.235 & 1.312 & 11   &  5  & 7  & 4 & 17.641 & 1.687 & 15.015 & 1.719 \\
         13   &  9  & 5  & 5 & 16.171 & 1.421 & 15.031 & 1.297 & 11   &  5  & 7  & 5 & 17.562 & 1.594 & 14.000 & 1.172 \\
         13   &  9  & 5  & 6 & 16.891 & 1.375 & 15.313 & 1.328 & 10   &  6  & 5   &  1 & 12.859 & 1.078 & 14.609 & 1.282 \\
         13   &  9  & 5  & 7 & 17.140 & 1.391 & 15.250 & 1.313 & 10   &  6  & 5   &  2 & 14.797 & 1.203 & 13.953 & 1.578 \\
         13   &  9  & 5  & 8 & 16.485 & 1.406 & 15.281 & 1.265 & 10   &  6  & 5   &  3 & 16.265 & 1.297 & 16.281 & 1.344 \\
         13   &  9  & 5  & 9 & 16.875 & 1.359 & 15.062 & 1.250 &  10   &  6  & 5   &  4 & 17.078 & 1.156 & 15.734 & 1.359 \\

         12   &  6  & 7  & 1 & 18.344 & 1.765 & 14.594 & 1.219 & 10   &  6  & 5   &  5 & 17.219 & 1.172 & 15.516 & 1.234 \\
         12   &  6  & 7  & 2 & 19.094 & 1.781 & 14.609 & 1.500 &  10   &  6  & 5   &  6 & 13.656 & 1.188 & 15.829 & 1.406 \\
         12   &  6  & 7  & 3 & 19.469 & 1.547 & 16.953 & 1.625 \\
         12   &  6  & 7  & 4 & 18.015 & 1.797 & 17.609 & 1.594 \\
         12   &  6  & 7  & 5 & 17.360 & 1.671 & 16.375 & 1.391 \\
         12   &  6  & 7  & 6 & 19.453 & 1.625 & 16.719 & 1.312 \\
    \hline
    \end{tabular}}
    \end{center}
    \begin{threeparttable}
        \begin{tablenotes}
            \footnotesize
            \item[1]
            In this table, the parameters `$n$, $k$, $d$, $r$' denote the code length, dimension, minimum distance, and the twisted row, respectively.
            The symbols `$t_1$' and `$t_1'$' denote the execution times of Algorithm 2 from \cite{Sun1}, whereas `$t_2$' and `$t_2'$' denote the execution times of Algorithm \ref{alg2} in this paper.
        \end{tablenotes}
    \end{threeparttable}
\end{table}
\vskip 2mm
\noindent\textbf{Acknowledgement.}

This work was supported by the National Natural Science Foundation of China (Grant Nos. 12271199, 12441102, 12171191).

\end{document}